\DeclareMathOperator*{\argmax}{arg\,max}
\newif\ifproofread
\newif\ifextendedreport
\newcommand{\removelatexerror}{\let\@latex@error\@gobble}
\newcommand{\changemarker}[1]{%
	\ifproofread
	\textcolor{blue}{#1}%
	\else
	#1%
	\fi
}
\newcommand{\showappendix}[1]{%
	\ifextendedreport
	\clearpage
	\appendix
	#1%
	\else
	\fi
}
\newcommand{\venueforappendix}[1]{%
	\ifextendedreport
	Appendix #1%
	\else
	our extended report \cite{zheng2022secure}%
	\fi
}
\newcommand\vldbdoi{10.14778/3587136.3587141}
\newcommand\vldbpages{1657 - 1670}
\newcommand\vldbvolume{16}
\newcommand\vldbissue{7}
\newcommand\vldbyear{2023}
\newcommand\vldbauthors{\authors}
\newcommand\vldbtitle{\shorttitle} 
\newcommand\vldbavailabilityurl{https://github.com/teijyogen/SecSV}
\newcommand\vldbpagestyle{empty} 
\begin{document}
	\proofreadfalse
	\extendedreporttrue
	\title {Secure Shapley Value for Cross-Silo Federated Learning}
	
	\author{Shuyuan Zheng}
	\affiliation{%
		\institution{Kyoto University}
	}
	\email{caryzheng@db.soc.i.kyoto-u.ac.jp}

	\author{Yang Cao}
	\affiliation{%
		\institution{Hokkaido University}
	}
	\email{yang@ist.hokudai.ac.jp}
	
	\author{Masatoshi Yoshikawa}
	\affiliation{%
		\institution{Kyoto University}
	}
	\email{yoshikawa@i.kyoto-u.ac.jp}

	\begin{abstract}
		The Shapley value (SV) is a fair and principled metric for contribution evaluation in cross-silo federated learning (cross-silo FL), wherein organizations, i.e., clients, collaboratively train prediction models with the coordination of a parameter server.
		However, existing SV calculation methods for FL assume that the server can access the raw FL models and public test data.
		This may not be a valid assumption in practice considering the emerging privacy attacks on FL models and the fact that test data might be clients' private assets.  
		Hence, we investigate the problem of \textit{secure SV calculation} for cross-silo FL. 
		We first propose \textit{HESV}, a one-server solution based solely on homomorphic encryption (HE) for privacy protection, which has limitations in efficiency.
		To overcome these limitations, we propose \textit{SecSV}, an efficient two-server protocol with the following novel features.
		First, SecSV utilizes a hybrid privacy protection scheme to avoid ciphertext--ciphertext multiplications between test data and models, which are extremely expensive under HE.
		Second, an efficient secure matrix multiplication method is proposed for SecSV.
		Third, SecSV strategically identifies and skips some test samples without significantly affecting the evaluation accuracy.
		Our experiments demonstrate that SecSV is $7.2$-$36.6\times$ as fast as HESV, with a limited loss in the accuracy of calculated SVs.
	\end{abstract}
	
	\maketitle
	
	\pagestyle{\vldbpagestyle}
	\begingroup\small\noindent\raggedright\textbf{PVLDB Reference Format:}\\
	\vldbauthors. \vldbtitle. PVLDB, \vldbvolume(\vldbissue): \vldbpages, \vldbyear.\\
	\href{https://doi.org/\vldbdoi}{doi:\vldbdoi}
	\endgroup
	\begingroup
	\renewcommand\thefootnote{}\footnote{\noindent
		This work is licensed under the Creative Commons BY-NC-ND 4.0 International License. Visit \url{https://creativecommons.org/licenses/by-nc-nd/4.0/} to view a copy of this license. For any use beyond those covered by this license, obtain permission by emailing \href{mailto:info@vldb.org}{info@vldb.org}. Copyright is held by the owner/author(s). Publication rights licensed to the VLDB Endowment. \\
		\raggedright Proceedings of the VLDB Endowment, Vol. \vldbvolume, No. \vldbissue\ %
		ISSN 2150-8097. \\
		\href{https://doi.org/\vldbdoi}{doi:\vldbdoi} \\
	}\addtocounter{footnote}{-1}\endgroup
	
	\ifdefempty{\vldbavailabilityurl}{}{
		\vspace{.3cm}
		\begingroup\small\noindent\raggedright\textbf{PVLDB Artifact Availability:}\\
		The source code, data, and/or other artifacts have been made available at \url{\vldbavailabilityurl}.
		\endgroup
	}
	
	\section{Introduction}
	Personal data are perceived as the new oil of the data intelligence era.
	Organizations (e.g., banks and hospitals) can use machine learning (ML) on personal data to acquire valuable knowledge and intelligence to facilitate improved predictions and decisions.
	However, acquiring sufficient personal data for ML-based data analytics is often difficult due to numerous practical reasons, such as the limited user scale and diversity; organizations face considerable risk of privacy breaches by sharing user data. 
	Consequently, large personal data are stored as \textit{data silos} with few opportunities to extract the valuable information contained therein.
	
	To exploit the data silos in a privacy-preserving manner, \textit{cross-silo federated learning} (cross-silo FL, may be referred to as \textit{cross-organization FL}) \cite{yang2019federated, li2020federated, kairouz2021advances} was introduced as a promising paradigm for collaborative ML.
	It enables organizations, i.e., clients, to train an ML model without sharing user data, thus largely protecting privacy.
	Concretely, in a typical model-training process of FL, each client trains a \textit{local model} on her local side and uploads it to a server. The server then aggregates all the local models into a \textit{global model}, which contains knowledge learned from clients' data silos.
	However, recent studies have shown that sharing the local models or local updates may reveal private information \cite{zhu2019deep, zhao2020idlg, geiping2020inverting, yin2021see}. 
	Thus, some \textit{secure federated training} systems \cite{phong2018privacy, truex2019hybrid, zhang2020batchcrypt, sav2020poseidon, zhang2021dubhe, jiang2021flashe, ma2022privacy} have deployed \textit{homomorphic encryption} (HE) to prevent the server from accessing the raw models.
	As shown in Figure \ref{fig:sec_fed_train}, the clients encrypt local models using HE, and the server aggregates the encrypted local models to obtain an encrypted global model that can only be decrypted by the clients.
	
	In typical cases of cross-silo FL, a small number of organizations (e.g., banks and hospitals) collaboratively train an ML model for their own use. Their data may substantially vary in size, quality, and distribution, making their contributions to the model disparate.  
	Therefore, compensations are required to incentivize clients with high contributions to cooperate. 
	In such cases, the Shapley value (SV) \cite{shapley1953sv} is crucial to promoting fair cooperation, which is widely adopted as a fair and principled metric of contribution evaluation. 
	The SV calculates the average impact of a client's data on every possible subset of other clients' data as her contribution and can be used for many downstream tasks in FL or collaborative ML, such as data valuation \cite{ghorbani2019data, jia2019efficient, jia2019towards, wang2020principled, wei2020efficient, liu2021dealer}, revenue allocation \cite{song2019profit,ohrimenko2019collaborative,liu2020fedcoin, han2020replication}, reward discrimination \cite{sim2020collaborative, tay2022incentivizing}, and client selection \cite{nagalapatti2021game}.
	However, existing studies on SV calculation for FL \cite{song2019profit, wang2020principled, wei2020efficient, liu2022gtg} assume that the server can access the raw local models and public test data.
	This may not be a valid assumption given the emerging privacy attacks on local models \cite{zhu2019deep, zhao2020idlg, geiping2020inverting, yin2021see} and that in practice, test data may be private \cite{ohrimenko2019collaborative, fallah2020personalized, huang2021personalized, wang2019federated, paulik2021federated}.
	
	\begin{figure*}[ht]
		\centering
		\begin{minipage}[t]{0.49\linewidth}
			\centering
			\includegraphics[scale=0.5]{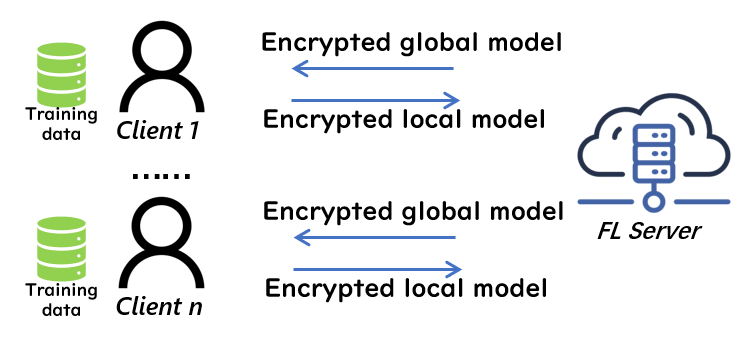}
			\subcaption{Secure federated training.}
			\label{fig:sec_fed_train}
		\end{minipage}
		\begin{minipage}[t]{0.49\linewidth}
			\centering
			\includegraphics[scale=0.5]{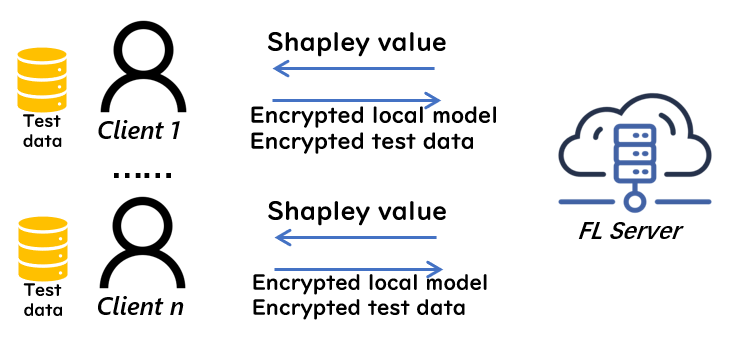}
			\subcaption{Secure SV calculation.}
			\label{fig:sec_sv_calc}
		\end{minipage}
		\vspace{-5pt}
		\caption{Secure federated training and SV calculation for cross-silo federated learning.}
	\end{figure*}

	\begin{example}
		Consider that some hospitals with different geographical distributions of patients collaboratively train an ML model for disease diagnosis by FL.
		Each hospital provides a training set of its patient data for federated training and a test set for SV-based contribution evaluation.
		As shown in Figures \ref{fig:sec_fed_train} and \ref{fig:sec_sv_calc}, they should ensure the security of both the training and SV calculation phases because the patient data are sensitive.
	\end{example}

	This study is the first to address \textit{secure Shapley value calculation} in cross-silo FL.
	Specifically, as depicted in Figure \ref{fig:sec_sv_calc}, extending the secure federated training for cross-silo FL \cite{phong2018privacy, truex2019hybrid, zhang2020batchcrypt, sav2020poseidon, zhang2021dubhe, jiang2021flashe, ma2022privacy}, we calculate SVs during the evaluation phase using homomorphically encrypted models.
	Solving this problem is significantly challenging because of the following two characteristics.
	First, secure SV calculation in cross-silo FL involves protecting both local models and client-side test data.
	Although the existing studies on SV calculation assume that a public test set is provided, in many practical cases (e.g., \cite{ohrimenko2019collaborative, fallah2020personalized, huang2021personalized, wang2019federated, paulik2021federated}), the test data are owned by clients as private assets (see Section \ref{sec:problem} for more details).
	Consequently, we protect  both models and test data in our problem, which cannot be supported by existing secure federated training methods because they only protect the model aggregation process.
	Second, SV calculation is an NP-hard task; it is computationally prohibitive to calculate SVs securely.
	To calculate SVs in FL, we must aggregate each subset of local models into an \textit{aggregated model} and evaluate its accuracy. This causes $O(2^{n})$ models to be tested, where $n$ is the number of clients.
	Although $n$ is relatively small in the cross-silo setting, secure SV calculation is highly inefficient because even testing only a single model is computationally expensive under HE.
	Moreover, the existing methods for accelerating SV calculation
	\cite{fatima2008linear, maleki2015addressing, ghorbani2019data, jia2019towards} can hardly improve the efficiency in our setting (as we will show in our experiments) because their strategies are based on sampling and testing a subset of the models to approximate the exact SVs, which is more suitable for large values of $n$.
	
	\subsection{Contributions}
	As a first step, we propose a one-server protocol for secure SV calculation, named \textit{HESV}.
	HESV only requires one server to calculate SVs and deploys a purely HE-based scheme for secure model testing, which suggests that both models and test data are encrypted using HE.
	However, the state-of-the-art (SOTA) homomorphic matrix multiplication method \cite{jiang2018e2dm} cannot multiply an encrypted matrix of model parameters by an encrypted matrix of input features of a batch of test samples when their sizes are large, which makes evaluating high-dimensional ML models infeasible.
	Hence, we propose an extended version of the SOTA, named \textit{Matrix Squaring}, which facilitates testing a wider range of models under HESV.
	Nevertheless, HESV has considerable limitations in efficiency:
	it involves computationally expensive \textit{ciphertext--ciphertext} (c2c) multiplications between encrypted models and data;
	it cannot encrypt together test samples from different clients in a single ciphertext to accelerate matrix multiplications;
	it evaluates the entire test set for all aggregated models, which is time-consuming.
	
	Subsequently, we propose \textit{SecSV}, a two-server protocol that overcomes the limitations of HESV. 
	SecSV is considerably more efficient than HESV, despite requiring an additional server to assist in secure SV calculation. This can be attributed to the following features. 
	First, we design a hybrid protection scheme for SecSV: models are encrypted using HE, whereas test data are protected by \textit{additive secret sharing} (ASS) \cite{shamir1979ass}. 
	Since test data protected by ASS are in plaintext, SecSV only calls for \textit{ciphertext--plaintext} (c2p) multiplications for model testing, which are significantly more efficient than c2c multiplications.
	Second, owing to the use of ASS, this hybrid scheme enables packing together and simultaneously evaluating numerous test samples from different clients to enhance efficiency. 
	Leveraging this feature, we propose a matrix multiplication method, \textit{Matrix Reducing}, which is significantly more efficient than Matrix Squaring when numerous test samples are batched together (i.e., the case under SecSV).
	Third, we propose an acceleration method for secure SV calculation called \textit{SampleSkip}.
	Our intuition is simple yet powerful:
	if some test samples can be correctly predicted by two models, these samples are likely to be correctly predicted by their aggregated model and thus skippable; 
	otherwise, these samples discriminate the models' utilities and clients' contributions.
	As test data are stored as plaintext under ASS, we can freely drop skippable samples from the batch for the aggregated model to considerably improve efficiency.
	Whereas existing SV acceleration methods \cite{fatima2008linear, maleki2015addressing, ghorbani2019data, jia2019towards} can hardly reduce the scale of model evaluations in cross-silo FL, SampleSkip always skips massive samples for testing and can be combined with these methods.
	
	Finally, we extensively verify the efficiency and effectiveness of our proposed techniques on diverse ML tasks such as \textit{image recognition}, \textit{news classification}, \textit{bank marketing}, and \textit{miRNA targeting}.
	SecSV achieves $7.2$-$36.6\times$ speedup w.r.t. HESV in our experiments.
	
	\section{Preliminary}
	\label{sec:pre}
	\subsubsection*{Machine learning task}
	In this paper, we focus on \textit{classification}, which is an ML task commonly considered in FL and covers a wide range of real-world applications.
	A $c$-class classifier is a prediction function $f: R^d \rightarrow R^c$ that given a $d$-sized \textit{feature vector} $x$ of a data sample yields a $c$-sized \textit{score vector} $f(x)$; 
	the argument $\hat{y} = \argmax_{j} f(x)[j]$ is assigned as the predicted label for features $x$, where $f(x)[j]$ denotes the $j$-th entry of $f(x)$.
	When a batch of $m$ samples is given, we overload the prediction function $f$ for classifying them:
	for a $d \times m$ \textit{feature matrix} $X$, where the $k$-th column is the feature vector of the $k$-th sample, the output $f(X)$ is a $c \times m$ \textit{score matrix} where the $k$-th column corresponds to the score vector for the $k$-th sample, and the predicted labels are elements of an $m$-sized vector $\hat{Y} = \mathbf{Argmax}(f(X))$, where $\mathbf{Argmax}$ returns the indices of the maximum values along columns of a given matrix.
	
	A learning task of classification is to seek a classifier $f_\theta$ in a \textit{function space} $F=\{f_\theta | \theta \in \Theta \}$, where $\theta$ is a set of parameters in a space $\Theta$.
	We refer to $\theta$ as \textit{model parameters} or simply \textit{model}. 
	The task essentially is to find a model $\theta$ that optimizes some objective.

	\subsubsection*{Federated learning}
	FL \cite{mcmahan2016federated} is a framework for collaborative ML consisting of $n$ clients and an FL server.
	Concretely, in each training round $t$, the server selects a subset $I^t$ of clients and broadcasts a \textit{global model} $\theta^t$ among selected clients.
	Each client $i\in I^t$ then trains $\theta^t$ using its own training data to derive a \textit{local model} $\theta^t_i$ and uploads it to the server.
	The server aggregates all the local models $\Theta_{I^t}=\{\theta^t_i \big\bracevert i\in I^t\}$ into a new global model $\theta^{t+1}=\sum_{i\in I^t} \omega^t_i \theta^t_i$, where $\omega^t_i \geq 0$ denotes the aggregation weight assigned to client $i$ for round $t$.
	Finally, after finishing $T$ training rounds, all clients obtain a final model $\theta^{T+1}$.
	
	\subsubsection*{Shapley value}
	The SV \cite{shapley1953sv} is a classic metric for evaluating a player's contribution to a coalition in collaborative game theory.
	It satisfies certain plausible properties in terms of fairness, including balance, symmetry, additivity, and zero element.
	Given clients $I =\{1,...,n\}$, the SV measures the expected marginal utility improvement by each client $i$ over all subsets $S$ of $I$:
	\begin{equation*}
		\label{eq:sv}
		SV_i = \sum\nolimits_{S\subseteq I \backslash\{i\}} \frac{|S|!(n-|S|-1)!}{n!}\big(v(S\cup \{i\}) - v(S)\big),
	\end{equation*}
	where $v(\cdot)$ is some utility function of a set of clients.
	
	\subsubsection*{Neural network}
	We provide an abstraction of \textit{neural network} (NN), the type of classifier considered throughout this paper.
	Consider a batch of samples with a feature matrix $X$.
	An NN $f_\theta$ consists of $L$ layers, where the $l$-th layer is a linear function $lin^{(l)}$ of the model parameters $\theta^{(l)} \subseteq \theta$ and input features $X^{(l)}$ of the layer, where $X^{(1)}=X$.
	For example, convolutional layers and fully-connected layers are typical linear layers.
	When $l<L$, the output features $\hat{Y}^{(l)}=lin^{(l)}(\theta^{(l)}, X^{(l)})$ is processed by an activation function $ac^{(l)}$ (e.g., the \textit{ReLU} and \textit{SoftMax} functions) that takes $\hat{Y}^{(l)}$ as input and outputs $X^{(l+1)}$, which is the input features of the $(l+1)$-th layer.
	Finally, $\hat{Y}^{(L)}$ is the score matrix for the given samples.
	Other classifiers may also fit this abstraction, e.g., logistic classifiers and SVM classifiers, which can be considered one-layer NNs.
	
	\subsubsection*{Homomorphic encryption}
	HE is a cryptographic primitive that enables arithmetic operations over encrypted data.
	Given a plaintext $pt$, we denote its ciphertext under HE as $\llbracket pt \rrbracket$.
	A \textit{fully HE} (FHE) system can support additions and multiplications between ciphertexts or between a ciphertext and a plaintext.
	A modern HE system, such as  \textit{CKKS} \cite{cheon2017ckks}, usually provides the single instruction multiple data (SIMD) functionality:
	a ciphertext of CKKS has $N$ \textit{ciphertext slots} to store scalars, where $N$ is a constant integer decided by the parameters of the HE system;
	it supports homomorphic entrywise addition $\oplus$ and multiplication \changemarker{(HMult)} $\odot$ between two encrypted vectors (or between an encrypted vector and a plaintext vector), which are almost as efficient as additions and multiplications between two encrypted scalars (or between an encrypted scalar and a plaintext scalar), respectively;
	it can also rotate an encrypted vector $\llbracket pt \rrbracket$ $j$ steps by a \textit{homomorphic rotation} \changemarker{(HRot)} $Rot(\llbracket pt \rrbracket, j)$.
	
	As each layer of an NN is a linear function $lin^{(l)}$ of input features $X^{(l)}$ and model parameters $\theta^{(l)}$, we can homomorphically evaluate it by implementing additions $\oplus$ and multiplications $\odot$ between/with $\llbracket X^{(l)} \rrbracket$ and $\llbracket \theta^{(l)} \rrbracket$. 
	Using SIMD, we can pack a batch of test samples as a matrix and simultaneously perform homomorphic operations over them.\footnote{A matrix's ciphertext is derived by horizontally scanning it into a vector.}
	However, exactly evaluating an activation function under HE is often difficult since it is usually nonlinear.
	
	\subsubsection*{Secure matrix multiplication}
	Some types of linear layers involve matrix multiplications, which HE does not directly support.
	For example, a fully-connected layer is a matrix multiplication between a matrix of model parameters and a matrix of input features.
	To homomorphically evaluate a matrix multiplication, we need to transform it into a series of entrywise additions and multiplications that can be directly evaluated under HE.
	
	Throughout this paper, when discussing secure matrix multiplication, we consider evaluating $AB$, where $A$ is a $d_{out} \times d_{in}$ matrix of model parameters, and $B$ is a $d_{in} \times m$ matrix of input features of $m$ samples; 
	$d_{in}$ and $d_{out}$ may be a linear layer's input and output sizes, respectively.
	For ease of discussion, we suppose $d_{out} \leq d_{in}$ without loss of generality.
	
	Let us define some notations for matrix operations.
	Given a $d_1 \times d_2$ matrix $\mathcal{M}$, $\mathcal{M}[j, k]$ denotes the $((k-1) \bmod d_2 + 1)$-th entry of the $((j-1) \bmod d_1 +1)$-th row of $\mathcal{M}$, and $\mathcal{M}[j_1:j_2, k_1:k_2]$ denotes the submatrix of $\mathcal{M}$ derived by extracting its $((k_1-1) \bmod d_2 + 1)$-th to $((k_2-1) \bmod d_2 + 1)$-th columns of its $((j_1-1) \bmod d_1 + 1)$-th to $((j_2-1) \bmod d_1 + 1)$-th rows.
	We use $\mathcal{M}_1;\mathcal{M}_2$ to denote a vertical concatenation of matrices $\mathcal{M}_1$ and $\mathcal{M}_2$ (if they have the same number of columns) and $\mathcal{M}_1 \big\bracevert \mathcal{M}_2$ to denote a horizontal concatenation of them (if they have the same number of rows).
	We also define four linear transformations $\sigma$, $\tau$, $\xi$, and $\psi$ that given a matrix $\mathcal{M}$ yield a transformed matrix of the same shape:
	\begin{align*}
		\forall j, k, \hspace{0.25em} & \sigma(\mathcal{M})[j, k] = \mathcal{M}[j, j+k], \tau(\mathcal{M})[j, k] = \mathcal{M}[j+k, k], \\
		& \xi(\mathcal{M})[j, k] = \mathcal{M}[j, k+1], \psi(\mathcal{M})[j, k] = \mathcal{M}[j+1, k]. 
	\end{align*}
	Additionally, when a superscript number $o$ is assigned to a linear transformation, it means applying the transformation $o$ times.
	
	\subsubsection*{Additive secret sharing}
	ASS \cite{shamir1979ass} protects a \textit{secret} by splitting it into multiple \textit{secret shares} such that they can be used to reconstruct the secret.
	In this paper, we only need to split a secret into two shares.
	Concretely, given a secret $s$ in a finite field $\mathbb{Z}_p$, where $p$ is a prime, we generate a uniformly random mask $r \in \mathbb{Z}_p$.
	Therefore, $s'=r$ and $s''=(s-r) \bmod p$ are the two shares of $s$.
	To reconstruct the secret, we simply need to add up the shares in the field, i.e., $s \equiv (s' + s'') \bmod p$.
	For a real-number secret, we can encode it as an integer, split it into two integer secret shares, and finally decode them to derive two real-number shares.
	The method for implementing secret sharing on real numbers can be found in \cite{riazi2018chameleon}.
	
	\section{Problem Formulation}
	\label{sec:problem}
	
	\subsubsection*{System model}
	We study secure SV calculation in cross-silo FL.
	We consider the scenario where $n$ organizations, i.e., clients $I=\{1,...,n\}$, lack sufficient training data and join FL to train an accurate prediction model for their own use.
	They run $T$ training rounds with the help of a parameter server.
	
	Considering that clients' training data may vary in size, quality, and distribution, the server needs to evaluate their contributions to the accuracy of the final model $\theta^{T+1}$ after training.
	For fair evaluation, each client $i$ contributes a test set of $m_i$ samples $D_i=(X_i, Y_i)$, where $X_i$ is a $d \times m_i$ feature matrix and $Y_i$ is an $m_i$-sized vector of ground-truth labels; 
	the $k$-th column of $X_i$ and $k$-th entry of $Y_i$ is the feature vector and ground-truth label of her $k$-th sample, respectively.
	The server evaluates the contributions based on a collective test set $\mathcal{D}\!=\!(D_1,...,D_n)$, which has $M=\sum_{i=1}^{n} m_i$ samples.
	
	However, calculating the original SV is impractical in FL.
	Intuitively, to calculate the SV, we must enumerate all subsets of clients, perform $T$ rounds of FL for each subset to obtain a final model, and test all the final models.
	Retraining the models is significantly expensive for computation and communication \cite{wang2020principled}, let alone securely training them;
	it may also cause extra privacy leakage owing to the more models to be released.
	Moreover, the SV assumes that the model utility is independent of the participation order of the clients, which does not stand true because clients may participate in or drop out from FL halfway \cite{wang2020principled}.
	
	Hence, we adopt the \textit{Federated SV} (FSV) \cite{wang2020principled}, a variant of the SV that addresses the aforementioned limitations well and guarantees the same advantageous properties as the SV. 
	The FSV is based on the concept that a client's contribution to the final model $\theta^{T+1}$ is the sum of her contribution to each training round. 
	Let $\theta^t_{S}$ denote the model aggregated from the local models of a set $S$ of clients:
	\begin{equation*}
		\theta^t_{S}=\begin{cases}
			\sum_{i\in S} \omega^t_{i|S} \cdot \theta^t_i &\text{if }S\neq \varnothing,\\
			\theta^t &\text{if }S = \varnothing,
		\end{cases}
	\end{equation*}
	where $\omega^t_{i|S}\geq 0$ is an aggregation weight assigned to client $i$ w.r.t. set $S$ and determined by the training algorithm (e.g., FedAvg \cite{mcmahan2016federated}).  
	When $|S|\!>\!1$, we term $\theta^t_{S}$ as an \textit{aggregated model}.
	Then, for each round $t$, the server evaluates each client' \textit{single-round SV} (SSV):
	\begin{equation*}
		\phi^t_i = \begin{cases}
			\sum_{S\subseteq I^t \backslash\{i\}} \frac{|S|!(|I^t|-|S|-1)!}{|I^t|!}\big(v(\theta_{S\cup \{i\}}) \!- \!v(\theta_{S})\big) &\text{if }i\in I^t,\\
			0 &\text{if }i\notin I^t,
		\end{cases}
	\end{equation*}
	where $v(\theta_{S})$ is the prediction accuracy of model $\theta_{S}$.
	Finally, the server aggregates each client $i$' SSVs into her FSV: $\phi_i = \sum\nolimits_{t\in [T]} \phi^t_i$.
	
	\subsubsection*{Threat model}
	Similar to prior works \cite{shokri2015privacy,zhang2020batchcrypt,liu2020secure}, we assume that all the parties are honest-but-curious and noncolluding because they are organizations complying with laws.
	Our problem is to design a protocol where the server coordinates the calculation of FSVs given encrypted local models while no party learns other parties' private information.  
	We may include an auxiliary server to assist the principal server, which is a common model in secure computation literature.
	In this case, we reasonably assume that both servers, e.g., two cloud service providers, are honest-but-curious and will not collude with each other because collusion puts their business reputations at risk.
	Notably, a server is not a machine but a party that may possess multiple machines for parallel computing.
	
	\subsubsection*{Privacy model}
	The private information considered in this study includes the test data and model parameters;
	the model structure decided by the clients is commonly assumed nonsensitive \cite{juvekar2018gazelle}.
	Readers may consider the test data as public information and question the need to protect them.
	Although a public test set is usually available to researchers for evaluating an ML algorithm or model, numerous practical cases exist where the test data are private:
	\begin{itemize}[leftmargin=*]
		\item In collaborative ML marketplaces \cite{ohrimenko2019collaborative, sim2020collaborative, han2020replication}, clients submit their own test data as a specification of the model they want to collaboratively train and purchase.
		\item In federated evaluation \cite{wang2019federated, paulik2021federated}, Apple and Google let users compute some performance metrics of FL models on their own test sets to improve the user experience. 
		\item For personalized cross-silo FL \cite{fallah2020personalized, huang2021personalized}, researchers assume that clients possess non-IID test data.
	\end{itemize} 
	The test data may contain clients' private information and be their proprietary assets, so they intuitively need to be protected.

	\section{One-Server Protocol: HESV}
	In this section, we present a one-server solution to secure SV calculation, named HESV (\underline{HE}-Based \underline{S}hapley \underline{V}alue).
	
	\subsection{Secure testing based purely on HE}
	HESV employs a purely HE-based privacy protection scheme that encrypts both model parameters and test data using HE, as described in Algorithm \ref{alg:hesv}. 
	To begin, each client encrypts her test data and local models using HE \changemarker{(Step 3)}.
	Then, for each training round $t$, the server enumerates all subsets of the selected clients $I^t$ \changemarker{(Step 5)};
	for each subset $S$, he aggregates the corresponding encrypted local models into $\llbracket \theta^t_S \rrbracket$ \changemarker{(Step 6)}, runs Algorithm \ref{alg:infer_hesv} to count the correct predictions made by $\llbracket \theta^t_S \rrbracket$ \changemarker{(Step 7)}, and derives model utility $v(\theta^t_S)$ \changemarker{(Step 8)}.
	Finally, clients' SSVs and FSVs are computed based on the utilities of local and aggregate models \changemarker{(Steps 9 and 10)}.
	
	\begin{figure}
		\small
		\begin{minipage}{\columnwidth}
			\removelatexerror
			\begin{algorithm}[H]
				\caption{One-Server Protocol: HESV}
				\begin{algorithmic}[1]
					\STATE Server: Randomly select a leader client
					\STATE Leader: Generate a public key $pk$ and a private key $sk$ of HE and broadcast them among the other clients
					\STATE Each client $i$: Encrypt her test data and local models and upload them
					\FOR{$t$ in $\{1,...,T\}$}
					\FOR {$S \subseteq I^t$ s.t. $|S| \geq 1$}
					\STATE Server: Compute $\llbracket \theta^t_S \rrbracket$ by aggregation under HE 
					\STATE Server: Run $\Pi_{HE}(\llbracket \theta^t_S \rrbracket, \llbracket X_i \rrbracket,\llbracket Y_i \rrbracket)$ to obtain $cnt_i$ for all $i \in I$
					\STATE Server: Calculate $v(\theta^t_S) = (cnt_1+...+cnt_n)/M$
					\ENDFOR
					\ENDFOR
					\STATE Server: Calculate SSVs $\phi^{t}_1,...,\phi^{t}_n, \forall t \in [T]$
					\STATE Server: Calculate FSVs $\phi_1,...,\phi_n$
				\end{algorithmic}
				\label{alg:hesv}
			\end{algorithm}
		\end{minipage}
		\begin{minipage}{\columnwidth}
			\removelatexerror
			\begin{algorithm}[H]
				\caption{$\Pi_{HE}$: Secure Testing for HESV}
				\begin{algorithmic}[1]
					\REQUIRE encrypted model $\llbracket \theta \rrbracket$, features $\llbracket X \rrbracket$, and labels $\llbracket Y \rrbracket$
					\ENSURE count $cnt$ of correct predictions 
					\FOR {each layer $l\in \{1,...,L\}$ of model $\theta$}
					\STATE Server: Calculate $\llbracket \hat{Y}^{(l)} \rrbracket = lin^{(l)}(\llbracket \theta^{(l)} \rrbracket, \llbracket X^{(l)} \rrbracket)$ and send $\llbracket \hat{Y}^{(l)} \rrbracket$ to client $i_{l+1} \neq i_{l}$, where $\llbracket X^{(1)} \rrbracket = \llbracket X \rrbracket$ 
					\STATE Client $i_{l+1}$: Decrypt $\llbracket \hat{Y}^{(l)} \rrbracket$
					\IF{$l < L$}
					\STATE Client $i_{l+1}$: Compute $X^{(l+1)}= ac^{(l)}(\hat{Y}^{(l)})$, and upload $\llbracket X^{(l+1)} \rrbracket$
					\ELSE 
					\STATE Client $i_{L+1}$: Compute $\hat{Y}=\mathbf{Argmax}(\hat{Y}^{(L)})$, and upload $\llbracket \hat{Y} \rrbracket$
					\ENDIF
					\ENDFOR
					\STATE Server: Calculate $\llbracket \tilde{Y} \rrbracket = -1 \odot \llbracket \hat{Y} \rrbracket \oplus \llbracket Y \rrbracket$ and send it to client $i_{L+2}$
					\STATE Client $i_{L+2}$: Decrypt $\llbracket \tilde{Y} \rrbracket$ and upload $cnt=\sum_{k=1}^{m} \boldsymbol{1}(|\tilde{Y}[k]| < 0.5)$
				\end{algorithmic}
				\label{alg:infer_hesv}
			\end{algorithm}
		\end{minipage}
	\end{figure}
	
	Considering that HE supports nonlinear activations poorly, HESV adopts the \textit{globally-encrypted-locally-decrypted} strategy \cite{juvekar2018gazelle,zheng2018gelu}: linear layers are homomorphically evaluated on the server's side, whereas activation functions are calculated on the clients' side without encryption.
	As depicted in Algorithm \ref{alg:infer_hesv}, for each layer $l$, there is a client $i_l$ holding input features $X^{(l)}$.
	The server then evaluates the linear function $lin^{(l)}$ by applying c2c multiplications/additions between/with the encrypted input $\llbracket X^{(l)} \rrbracket$ \changemarker{(Step 2)} and model parameters $\llbracket \theta^{(l)} \rrbracket$ and sends the output features $\llbracket \hat{Y}^{(l)} \rrbracket$ to client $i_{l+1}$ for decryption \changemarker{(Step 3)}.
	Clients $i_{l}$ and $i_{l+1}$ should be different entities owing to a security issue that will be discussed in Section \ref{sec:security}.
	If $l<L$, client $i_{l+1}$ applies the activation function $ac^{(l)}(\hat{Y}^{(l)})$ to obtain the input $X^{(l+1)}$ of the subsequent layer \changemarker{(Step 5)};
	otherwise, she calculates the predicted labels $\hat{Y}$ \changemarker{(Step 7)}.
	Finally, the server computes the differences $\llbracket \tilde{Y} \rrbracket$ between the predictions $\llbracket \hat{Y} \rrbracket$ and ground-truth labels $\llbracket Y \rrbracket$ 
	and counts correct predictions with the help of some client $i_{L+2}$ \changemarker{(Steps 8 and 9)}.
	Considering that HE introduces slight noise into ciphertexts, to tolerate the noise, we identify correct predictions by judging whether the absolute difference $|\tilde{Y}[k]|$ is smaller than $0.5$ rather than whether $|\tilde{Y}[k]| = 0$.

	\subsection{Matrix Squaring}
	We propose an extension to the SOTA method called \textit{Matrix Squaring} for homomorphic matrix multiplications under HESV.
	
	\subsubsection{SOTA method}
	\label{sec:mat_square_original}
	
	When $d_{in} \leq \lfloor \sqrt{N} \rfloor$, the SOTA method \cite{jiang2018e2dm} supports computing the matrix product $AB$ under HE.
	Suppose that $d_{out}$ exactly divides $d_{in}$.\footnote{If $d_{out}$ divides $d_{in}$ with a remainder, we can pad $A$ with zero-valued rows to obtain a $d'_{out} \times d_{in}$ matrix such that $d'_{out}$ exactly divides $d_{in}$.}
	This method evaluates $AB$ as follows:
	\begin{enumerate}[leftmargin=*]
		\item \textbf{Squaring}: We obtain two square matrices $\bar{A}$ and $\bar{B}$ of order $d_{in}$.
		The $\bar{A}$ matrix vertically packs $d_{in} / d_{out}$ copies of $A$, i.e., $\bar{A} = (A; ...; A)$, while $\bar{B}$ is derived by padding $(d_{in} - m)$ zero-valued columns $\boldsymbol{0}_{d_{in}\times (d_{in} - m)}$ to the end edge of $B$, i.e.,  $\bar{B} = (B \big\bracevert \boldsymbol{0}_{d_{in}\times (d_{in} - m)})$.
		\item \textbf{Linear transformation}: We linearly transform $\bar{A}$ and $\bar{B}$ into two sets of matrices $\{\bar{A}^{(o)}\}_{o=1}^{d_{out}}$ and $\{\bar{B}^{(o)}\}_{o=1}^{d_{out}}$, respectively. 
		Matrices $\bar{A}^{(1)}$ and $\bar{B}^{(1)}$ are derived by rotating the $j$-th row of $\bar{A}$ $j-1$ steps for all $j \in [d_{in}]$ and rotating the $k$-th column of $\bar{B}$ $k-1$ steps for all $k \in [d_{in}]$, respectively, i.e., $\bar{A}^{(1)} = \sigma(\bar{A})$, and $\bar{B}^{(1)} = \tau(\bar{B})$.
		Then, for each $o \in [2, d_{out}]$, we can shift $\bar{A}^{(1)}$ $o-1$ columns and $\bar{B}^{(1)}$ $o-1$ rows to obtain the $\bar{A}^{(o)}$ and $\bar{B}^{(o)}$ matrices, i.e., $\bar{A}^{(o)} = \xi^{(o-1)}(\bar{A}^{(1)})$, and $\bar{B}^{(o)} = \psi^{(o-1)}(\bar{B}^{(1)})$.
		\item \textbf{Encryption}: We encrypt the transformed matrices $\{\bar{A}^{(o)}\}_{o=1}^{d_{out}}$ and $\{\bar{B}^{(o)}\}_{o=1}^{d_{out}}$ and upload them to the server.
		\item \textbf{Entrywise operations}: The server computes $\llbracket H \rrbracket = \llbracket \bar{A}^{(1)} \rrbracket \odot \llbracket \bar{B}^{(1)} \rrbracket \oplus ... \oplus \llbracket \bar{A}^{(d_{out})} \rrbracket \odot \llbracket \bar{B}^{(d_{out})} \rrbracket$.
		\item \textbf{Rotation and extraction}: The matrix product $AB$ can be obtained by vertically splitting matrix $H$ into $d_{in} / d_{out}$ submatrices, adding them up, and extracting the first $m$ columns of the result, i.e., $AB=\tilde{H}[1:d_{out}, 1:m]$,
		where $        \llbracket \tilde{H} \rrbracket = \oplus_{o=0}^{d_{in} / d_{out} - 1} Rot(\llbracket H \rrbracket, d_{out} \cdot d_{in} \cdot o)$.\footnote{For computing $\llbracket \tilde{H} \rrbracket$, the server can apply a repeated doubling approach to improve efficiency \cite{jiang2018e2dm}.}
	\end{enumerate}
	
	However, a ciphertext of HE does not have sufficient slots to store a large matrix of order $d_{in}> \lfloor \sqrt{N} \rfloor$.
	This is a typical case because an NN's input is usually large.
	Even if we can use multiple ciphertexts to store the matrix, slot rotations across ciphertexts are not supported, which makes the SOTA method fail. 
	
	\begin{figure}[t]
		\centering
		\includegraphics[scale=0.6]{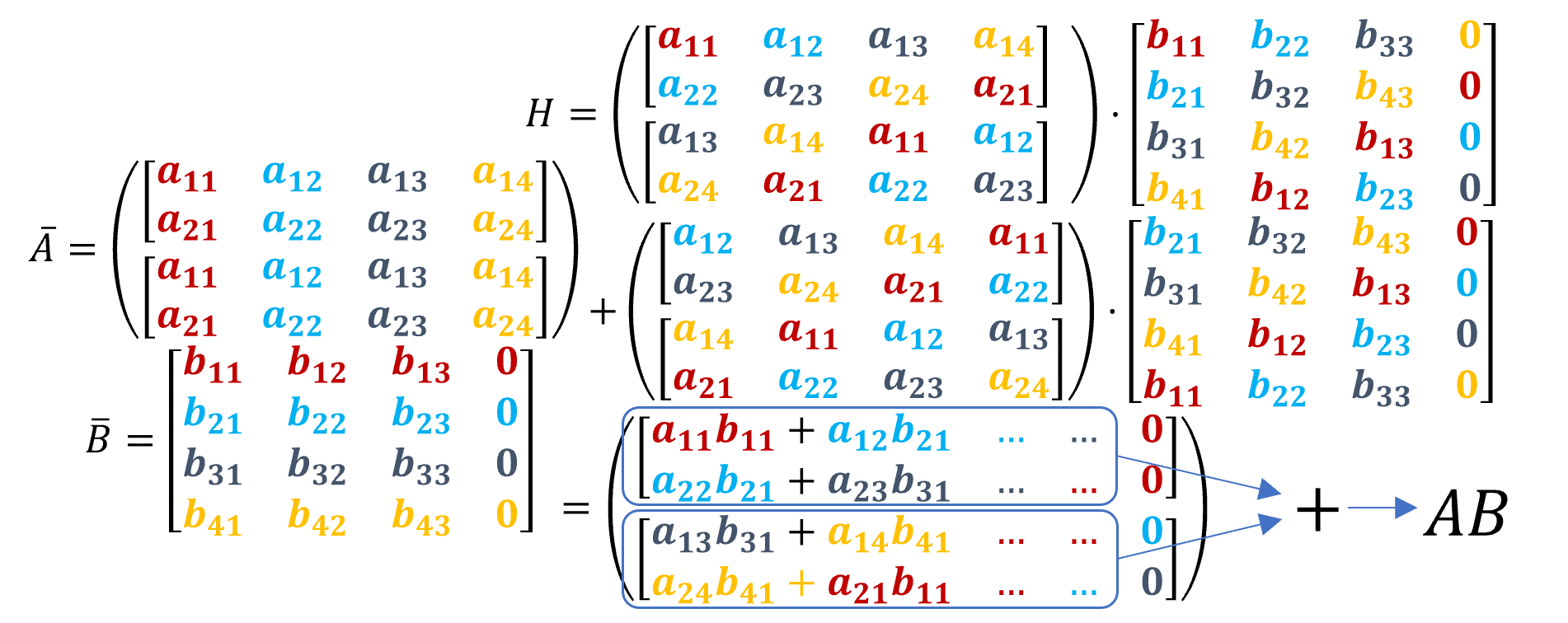}
		\caption{\changemarker{ SOTA method for matrix multiplication.}}
		\label{fig:sota}
	\end{figure}
	
	\begin{example}
		Figure \ref{fig:sota} shows how the SOTA method works for a $2 \times 4$ matrix $A$ and a $4 \times 3$ matrix $B$ with $N = 16$, where $A[i,j]=a_{ij}$, and $B[i,j]=b_{ij}$. 
		First, we vertically pack two copies of $A$ and pad $B$ with zeros to derive $4\times 4$ matrices $\bar{A}$ and $\bar{B}$, respectively.
		We then apply entrywise operations over two pairs of transformed matrices to obtain $H$.
		Essentially, $AB$ is derived by adding $H[1:2,1:3]$ and $H[3:4,1:3]$.
		However, when $N = 12$, the transformed matrices are overly large to encrypt into a ciphertext.
	\end{example}
	
	\subsubsection{Our improvement}
	To address this issue, Matrix Squaring involves dividing matrices $A$ and $B$ into smaller submatrices that can be stored in a ciphertext.
	Suppose that $\lfloor \sqrt{N} \rfloor$ exactly divides $d_{in}$ without loss of generality.\footnote{We can pad $A$ and $B$ with zeros to ensure this condition.}
	Concretely, when $d_{in}> \lfloor \sqrt{N} \rfloor$, we can vertically split $A$ every $\lfloor \sqrt{N} \rfloor$-th column to obtain $K$ submatrices $A_{(\cdot, 1)}, ...,A_{(\cdot, K)}$ and horizontally split $B$ every $\lfloor \sqrt{N} \rfloor$-th row to derive $K$ submatrices $B_{(1, \cdot)}, ...,B_{(K, \cdot)}$, where $K=d_{in} / \lfloor \sqrt{N} \rfloor$, $(A_{(\cdot, 1)}\big\bracevert ... \big\bracevert A_{(\cdot, K)})=A$, and $(B_{(1, \cdot)}; ...;B_{(K, \cdot)})=B$.
	Then, we have
	\begin{equation*}
		AB = \sum_{k=1}^{K} A_{(\cdot, k)}B_{(k, \cdot)}.
	\end{equation*}
	Therefore, we can evaluate $AB$ under HE by applying the SOTA method over $d_{in} / \lfloor \sqrt{N} \rfloor$ pairs of submatrices $A_{(\cdot, k)}$ and $B_{(k, \cdot)}$ and aggregating the results.
	This inherently requires that $m$ should not exceed $\lfloor \sqrt{N} \rfloor$; 
	otherwise, any square matrix transformed from a $\lfloor \sqrt{N} \rfloor \times m$ matrix $B_{(k, \cdot)}$ cannot be encrypted into a single ciphertext.
	Furthermore, when $d_{out}> \lfloor \sqrt{N} \rfloor$, we can horizontally split $A_{(\cdot, k)}$ every $\lfloor \sqrt{N} \rfloor$-th row into $A_{(1,k)},...,A_{(J,k)}$, where $J=\lceil d_{out} / \lfloor \sqrt{N} \rfloor \rceil$, and $(A_{(1,k)};...;A_{(J,k)})=A_{(\cdot, k)}$.
	Then, we have 
	\begin{equation*}
		A_{(\cdot, k)}B_{(k, \cdot)} = (A_{(1,k)}B_{(k, \cdot)};...;A_{(J,k)}B_{(k, \cdot)}).
	\end{equation*}
	Hence, we can evaluate $A_{(\cdot, k)}B_{(k, \cdot)}$ by applying the SOTA method over $J$ pairs of $A_{(j,k)}$ and $B_{(k, \cdot)}$ and vertically packing the results.
	
	\begin{figure}[ht]
		\centering
		\includegraphics[scale=0.4]{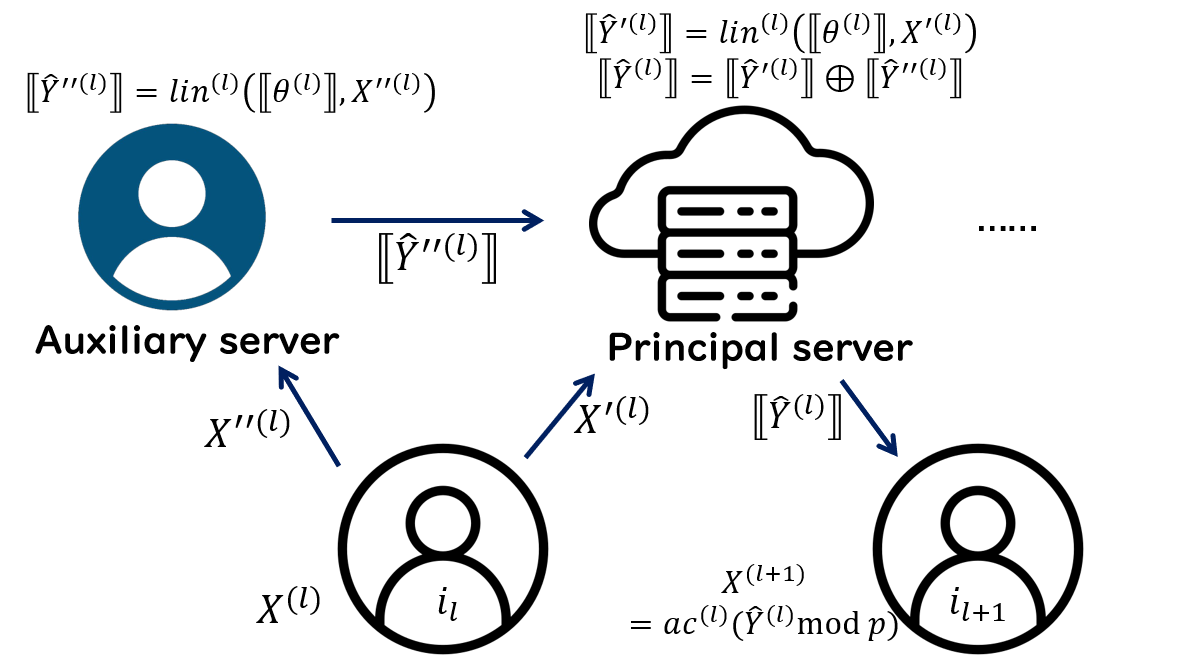}
		\caption{Secure testing for SecSV.}
		\label{fig:secsv}
	\end{figure}
	
	\begin{figure}[ht]
		\small
		\begin{minipage}{\columnwidth}
			\removelatexerror
			\begin{algorithm}[H]
				\caption{SecSV}
				\begin{algorithmic}[1]
					\STATE Server $\mathcal{P}$: Randomly select a leader client
					\STATE Leader: Generate a public key $pk$ and a private key $sk$ of HE and broadcast them among the other clients
					\STATE Each client $i$: Encrypt her own local models and send them to the two servers; then, generate secret shares $D_i', D_i''$ of $D_i$, send $D_i'$ to server $\mathcal{P}$, and send $D_i''$ to server $\mathcal{A}$
					\FOR{$t$ in $\{1,...,T\}$}
					\IF{$\text{skipSamples}==\text{True}$}
					\STATE Server $\mathcal{P}$: Run SampleSkip($\{\llbracket \theta^t_i \rrbracket\}_{\forall i \in I^t}, (\mathcal{D}', \mathcal{D}''))$ to obtain utilities $\{v(\theta^t_S)\}_{\forall S \subseteq I^t, |S| > 0}$
					\ELSE 
					\FOR {$S \subseteq I^t, |S| > 0$}
					\STATE Servers $\mathcal{P}, \mathcal{A}$: Compute $\llbracket \theta^t_S \rrbracket$ by aggregation under HE 
					\STATE Server $\mathcal{P}$: Run $\Pi_{Sec}(\llbracket \theta^t_S \rrbracket, (\mathcal{D}', \mathcal{D}''))$ to obtain $\Phi^t_{S}$
					\STATE Server $\mathcal{P}$: Calculate $v(\theta^t_S) = |\Phi^t_{S}|/M$
					\ENDFOR
					\ENDIF
					\ENDFOR
					\STATE Server $\mathcal{P}$: Calculate SSVs $\phi^{t}_1,...,\phi^{t}_n, \forall t \in [T]$
					\STATE Server $\mathcal{P}$: Calculate FSVs $\phi_1,...,\phi_n$
				\end{algorithmic}
				\label{alg:secsv}
			\end{algorithm}
		\end{minipage}
		\begin{minipage}{\columnwidth}
			\removelatexerror
			\begin{algorithm}[H]
				\caption{$\Pi_{Sec}$: Secure Testing for SecSV}
				\begin{algorithmic}[1]
					\REQUIRE encrypted model $\llbracket \theta \rrbracket$, and secret shares $\mathcal{D}', \mathcal{D}''$ of $\mathcal{D}$
					\ENSURE IDs $\Phi$ of correctly predicted samples
					\STATE $\Phi\gets \emptyset$
					\FOR {each $D'=(X',Y'), D''=(X'',Y'') \in (\mathcal{D}', \mathcal{D}'')$}
					\FOR {each model layer $l\in \{1,...,L\}$}
					\STATE Server $\mathcal{A}$: Calculate $\llbracket \hat{Y}''^{(l)} \rrbracket = lin^{(l)}(\llbracket \theta^{(l)} \rrbracket, X''^{(l)})$ and send $\llbracket \hat{Y}''^{(l)} \rrbracket$ to server $\mathcal{P}$, where $X''^{(1)} = X''$
					\STATE Server $\mathcal{P}$: Calculate $\llbracket \hat{Y}'^{(l)} \rrbracket = lin^{(l)}(\llbracket \theta^{(l)} \rrbracket, X'^{(l)})$, where $X'^{(1)} = X'$
					\STATE Server $\mathcal{P}$: Send $\!\llbracket \hat{Y}^{(l)} \!\rrbracket \!=\! \llbracket\! \hat{Y}'^{(l)} \!\rrbracket \!\oplus\! \llbracket \!\hat{Y}''^{(l)} \!\rrbracket$ to client $i_{l+1} \neq i_{l}$ 
					\STATE Client $i_{l+1}$: Compute $\hat{Y}^{(l)} = Decrypt(\llbracket \hat{Y}^{(l)} \rrbracket) \bmod p$
					\IF{$l < L$}
					\STATE Client $i_{l+1}$: Compute $X^{(l+1)}= ac^{(l)}(\hat{Y}^{(l)})$, generate shares $X'^{(l+1)}, X''^{(l+1)}$, and distribute them to servers $\mathcal{P}$ and $\mathcal{A}$
					\ELSE 
					\STATE Client $i_{L+1}$: Calculate $\hat{Y}=\mathbf{Argmax}(\hat{Y}^{(L)})$, generate shares $\hat{Y}', \hat{Y}''$, and distribute them to servers $\mathcal{P}$ and $\mathcal{A}$
					\ENDIF
					\ENDFOR
					\STATE Server $\mathcal{A}$: Compute and send $\tilde{Y}''= \hat{Y}'' - Y''$ to server $\mathcal{P}$
					\STATE Server $\mathcal{P}$: Calculate $\tilde{Y} = abs((\hat{Y}' - Y' + \tilde{Y}'') \bmod p)$, where $abs$ denotes taking the entrywise absolute value
					\STATE Server $\mathcal{P}$: Update the IDs of correctly predicted samples $\Phi \gets \Phi \cup \{id(D[k])| \tilde{Y}[k] < 0.5\}$, where $id(D[k])$ is the ID of  test sample $D[k]$
					\ENDFOR
					\RETURN $\Phi$
				\end{algorithmic}
				\label{alg:infer_secsv}
			\end{algorithm}
		\end{minipage}
	\end{figure}

	\section{Two-Server Protocol: SecSV}
	
	HESV has three considerable drawbacks.
	First, it involves numerous c2c multiplications, which are highly inefficient in computation.
	Second, it cannot fully utilize the SIMD feature of HE.
	Since clients encrypt test samples on their local sides, the server cannot pack samples from different sources, which may waste some ciphertext slots.
	Third, it fully evaluates the entire test set for all aggregated models, which is time-consuming. 
	In this section, we propose a two-server protocol with an auxiliary server named SecSV (\underline{Sec}ure \underline{S}hapley \underline{V}alue) to overcome the drawbacks of HESV. The features of this protocol are (1) a hybrid secure testing scheme, (2) an efficient homomorphic matrix multiplication method, and (3) an acceleration technique for SV calculation.

	\subsection{Hybrid Secure Testing Scheme}
	\label{sec:hybrid_testing}
	SecSV adopts a hybrid scheme for secure testing: it encrypts models by HE but protects test data by ASS.
	An auxiliary server $\mathcal{A}$ is needed to help the principal server $\mathcal{P}$ test models on secretly shared test data.
	Concretely, as shown in Algorithm \ref{alg:secsv}, each client $i$ encrypts her local models by HE but protects her test data $D_i$ by splitting it into two secret shares $D'_i$ and $D''_i$ (Step 3).
	Thereafter, the two servers evaluate the shares $\mathcal{D}', \mathcal{D}''$ of the collective test set $\mathcal{D}=(D_1,...,D_n)$ by running Algorithm \ref{alg:infer_secsv} (Step 10).
	Figure \ref{fig:secsv} shows that because the shares $X'^{(l)}, X''^{(l)}$ of input features $X^{(l)}$ are in plaintext form for all layers $l$, c2c multiplications are avoided.

	Algorithm \ref{alg:infer_secsv} shows how SecSV evaluates an encrypted model.
	For each model layer $l$, server $\mathcal{P}$ holds a share $X'^{(l)}$ of the input features $X^{(l)}$ while server $\mathcal{A}$ possesses the other share $X''^{(l)}$.
	They each evaluate the linear function $lin^{(l)}$ over their own shares to compute shares $\llbracket \hat{Y}'^{(l)} \rrbracket$ and $\llbracket \hat{Y}''^{(l)} \rrbracket$, respectively (Steps 4 and 5).
	Then, after receiving $\llbracket \hat{Y}''^{(l)} \rrbracket$ from server $\mathcal{A}$, server $\mathcal{P}$ adds up $\llbracket \hat{Y}'^{(l)} \rrbracket$ and $\llbracket \hat{Y}''^{(l)} \rrbracket$ to reconstruct the output features $\llbracket \hat{Y}^{(l)} \rrbracket$ (Step 6), which is sent to client $i_{l+1}$ for decryption and modulo (Step 7).
	If $l<L$, client $i_{l+1}$ activates the output features and generates shares $X'^{(l+1)}, X''^{(l+1)}$ of the activated features $X^{(l+1)}$ for evaluating the next layer (Step 9);
	otherwise, client $i_{L+1}$ computes the predicted labels $\hat{Y}$ and generates shares $\hat{Y}', \hat{Y}''$ for comparison with the shares $Y', Y''$ of the ground-truth labels $Y$ (Step 11).
	After obtaining the absolute differences $\tilde{Y}$ between $\hat{Y}$ and $Y$ (Steps 12 and 13), server $\mathcal{P}$ updates an ID set $\Phi$ that contains the IDs of the correctly predicted samples with a tolerable difference $\tilde{Y}[k] < 0.5$ (Step 14).

	\begin{example}
		Consider $4$ clients and $3$-layer models.
		When testing client $1$' local model $\theta^t_1$, given shares $X'^{(1)}, X''^{(1)}$ of input features from client $2$, the servers evaluate layer $1$ under HE, aggregate shares $\hat{Y}'^{(1)}, \hat{Y}''^{(1)}$ of output features, and return $\hat{Y}^{(1)}$ to client $3$ for activation.
		Similarly, given $X'^{(2)}, X''^{(2)}$ from client $3$, the servers evaluate layer $2$ and send $\hat{Y}^{(2)}$ to client $4$.
		Finally, client $3$ obtains the output features $\hat{Y}^{(3)}$, computes the predicted labels $\hat{Y}$, and returns shares $\hat{Y}', \hat{Y}''$ to the servers for comparison with the shares $Y', Y''$ of ground-truth labels.
	\end{example}

	\subsection{Matrix Reducing}
	
	\subsubsection*{Our insight}
	We then propose a secure matrix multiplication method named \textit{Matrix Reducing}.
	To evaluate $AB$ under HE, we have to transform $A$ and $B$ such that they have the same shape because HE only allows entrywise operations.
	For example, the SOTA method squares and linearly transforms $A$ and $B$ to derive pairs of square matrices, applies entrywise multiplications between each pair of matrices, and aggregates the multiplication results.
	However, the aggregated square matrix $H$ (see Section \ref{sec:mat_square_original}) is not the matrix product $AB$, and the SOTA method further needs to rotate $H$ to aggregate its submatrices, which is expensive under HE.
	This inefficiency results from the need to square $A$ and $B$:
	the SOTA method squares a rectangular $A$ by packing multiple copies of $A$ together, which enables parallel computation over $A$ and avoids the waste of some ciphertext slots.
	
	Considering that the matrix product $AB$ is a $d_{out} \times m$ matrix, we introduce the concept of transforming $A$ and $B$ into $d_{out} \times m$ matrices to avoid homomorphic rotations.
	Concretely, in the unencrypted environment, to obtain each entry in the matrix product $AB$, we need $d_{in}$ multiplications between entries in $A$ and $B$; 
	this suggests that we can somehow preprocess $A$ and $B$ to generate $d_{in}$ pairs of $d_{out} \times m$ matrices, apply entrywise multiplications between each pair of matrices, and aggregate the multiplication results to derive $AB$.
	We design Matrix Reducing based on this idea.
	
	\begin{algorithm}[t]
		\small
		\caption{Matrix Reducing}
		\begin{algorithmic}[1]
			\REQUIRE $d_{out} \times d_{in}$ matrix $A$, and $d_{in} \times m$ matrix $B$ ($m \leq \lfloor N/d_{out} \rfloor$)
			\ENSURE encrypted matrix product $\llbracket AB \rrbracket$
			\STATE Client: Horizontally pack $\lceil m/d_{in} \rceil$ copies of $A$ into $\bar{A}=(A\big\bracevert...\big\bracevert A)$
			\STATE Client: Run the first two steps of $\delta(\bar{A}, B)$ to obtain $\{\tilde{A}^{(o)}, \tilde{B}^{(o)}\}_{o=1}^{d_{in}}$
			\STATE Client: Encrypt $\{\tilde{A}^{(o)}\}_{o=1}^{d_{in}}$ and $\{\tilde{B}^{(o)}\}_{o=1}^{d_{in}}$ and upload them
			\STATE Server: Return $\llbracket AB \rrbracket\! = \!\llbracket \tilde{A}^{(1)} \rrbracket \!\odot \!\llbracket \tilde{B}^{(1)} \rrbracket \!\oplus\! ...\! \oplus\! \llbracket \tilde{A}^{(d_{in})} \rrbracket \!\odot \!\llbracket \tilde{B}^{(d_{in})} \rrbracket$
		\end{algorithmic}
		\label{alg:mat_reduce}
	\end{algorithm}
	
	\subsubsection*{Design details}
	To assist in understanding Matrix Reducing, we introduce a function $R = \delta(\bar{A}, \bar{B})$ that takes as input a $d_{out} \times d$ matrix $\bar{A}$ and a $d_{in} \times m$ matrix $\bar{B}$ and outputs a $d_{out} \times m$ matrix $R$, where $d \geq m$ and $d_{in}$ exactly divides $d$.
	$\delta(\bar{A}, \bar{B})$ runs as follows:
	\begin{enumerate}[leftmargin=*]
		\item \textbf{Linear transformation:} We linearly transform $\bar{A}$ and $\bar{B}$ into two sets of matrices $\{\bar{A}^{(o)}\}_{o=1}^{d_{in}}$ and $\{\bar{B}^{(o)}\}_{o=1}^{d_{in}}$, where $\bar{A}^{(o)}=\xi^{(o-1)}(\sigma(\bar{A}))$, and $\bar{B}^{(o)}=\psi^{(o-1)}(\tau(\bar{B}))$.
		\item \textbf{Reduction:} For each $o\in [d_{in}]$, we extract two $d_{out} \times m$ matrices $\tilde{A}^{(o)}$ and $\tilde{B}^{(o)}$ from $\bar{A}^{(o)}$ and $\bar{B}^{(o)}$, respectively, where $\tilde{A}^{(o)} = \bar{A}^{(o)}[1:d_{out}, 1:m]$, and $\tilde{B}^{(o)} = \bar{B}^{(o)}[1:d_{out}, 1:m]$.
		\item \textbf{Element-wise operations:} We apply entrywise multiplications and additions to compute $R=\sum_{o=1}^{d_{in}}\tilde{A}^{(o)} \cdot \tilde{B}^{(o)}$.
	\end{enumerate}
	When $m \leq d_{in}$, according to lemma \ref{lemma:correctness}, we have $AB = \delta(A, B)$.
	Therefore, we can encrypt matrices $\{\tilde{A}^{(o)}\}_{o=1}^{d_{in}}$ and $\{\tilde{B}^{(o)}\}_{o=1}^{d_{in}}$ and apply homomorphic entrywise multiplications and additions over them to compute $\llbracket AB \rrbracket$.
	
	\begin{lemma}[Correctness]
		\label{lemma:correctness}
		We have $AB=\delta(A, B)$ for any $d_{out} \! \times \! d_{in}$ matrix $A$ and $d_{in} \! \times\! m$ matrix $B$, where $m\! \leq\! d_{in}$, and $d_{out} \!\leq \! d_{in}$.\footnote{See \venueforappendix{\ref{appendix:proof}} to find all the missing proofs.}
	\end{lemma}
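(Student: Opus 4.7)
The plan is to prove $AB=\delta(A,B)$ by unfolding all four transformations ($\sigma$, $\tau$, $\xi$, $\psi$) at the entry level, writing out the sum that defines $R=\delta(A,B)$, and recognizing it as the standard inner product after a cyclic re-indexing. Since $\bar{A}=A$ has shape $d_{out}\times d_{in}$ and $\bar{B}=B$ has shape $d_{in}\times m$, the reduction step simply trims to $\tilde{A}^{(o)}$ of size $d_{out}\times m$ (which requires $m\leq d_{in}$ so that column indices are not out of range relative to $\bar{A}^{(o)}$, whose column count is $d_{in}$) and to $\tilde{B}^{(o)}$ of size $d_{out}\times m$ (which uses $d_{out}\leq d_{in}$ for the row extraction from $\bar{B}^{(o)}$ to be well defined). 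No trimming actually destroys information because the $(j,k)$ entries we keep are exactly those we need.

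The concrete computation I would carry out is as follows. First, from the definitions,
\[
\sigma(A)[j,k]=A[j,j+k],\qquad \tau(B)[j,k]=B[j+k,k],
\]
and then $\xi^{(o-1)}$ shifts columns while $\psi^{(o-1)}$ shifts rows, giving
\[
\bar{A}^{(o)}[j,k]=A[j,\,j+k+o-1],\qquad \bar{B}^{(o)}[j,k]=B[\,j+k+o-1,\,k].
\]
For $j\in[d_{out}]$ and $k\in[m]$, the reduction yields $\tilde{A}^{(o)}[j,k]=\bar{A}^{(o)}[j,k]$ and $\tilde{B}^{(o)}[j,k]=\bar{B}^{(o)}[j,k]$ (all column indices of $A$ and row indices of $B$ are taken modulo $d_{in}$ per the notation convention). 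Therefore
\[
R[j,k]=\sum_{o=1}^{d_{in}}A[j,\,j+k+o-1]\cdot B[\,j+k+o-1,\,k].
\]
The decisive observation is that the \emph{same} expression $j+k+o-1$ appears as the column index of $A$ and the row index of $B$, which is precisely the index that must be contracted in a matrix product.

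Next I would perform the change of summation variable $o'\equiv j+k+o-1\pmod{d_{in}}$. Because $j$ and $k$ are fixed, the map $o\mapsto o'$ is a cyclic permutation of $\{1,\dots,d_{in}\}$, so the sum is unchanged when it is reindexed by $o'$:
\[
R[j,k]=\sum_{o'=1}^{d_{in}}A[j,o']\cdot B[o',k]=(AB)[j,k].
\]
Since this holds for every $(j,k)\in[d_{out}]\times[m]$, we conclude $R=AB$, i.e.\ $\delta(A,B)=AB$.

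The main obstacle I anticipate is bookkeeping the cyclic (mod-$d_{in}$) indexing consistently across all four transformations, particularly to justify that the reduction step $\bar{A}^{(o)}[1{:}d_{out},1{:}m]$ and $\bar{B}^{(o)}[1{:}d_{out},1{:}m]$ discards only entries that never appear in the final sum. This is exactly where the hypotheses $m\leq d_{in}$ and $d_{out}\leq d_{in}$ are used: the former ensures each extracted column of $\bar{A}^{(o)}$ is genuinely one of its $d_{in}$ columns (no aliasing collapses distinct contributions), and the latter makes the row-extraction from $\bar{B}^{(o)}$ meaningful. Apart from this index tracking, the proof is a direct calculation with no further inequalities or estimates required.
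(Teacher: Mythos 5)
Your proposal is correct and follows essentially the same route as the paper's proof: unfold the four transformations entrywise to get $\delta(A,B)[j,k]=\sum_{o=1}^{d_{in}}A[j,j+k+o-1]\cdot B[j+k+o-1,k]$ and then reindex the sum cyclically to recover $\sum_{o=1}^{d_{in}}A[j,o]\cdot B[o,k]=(AB)[j,k]$. Your version is slightly more explicit than the paper's in justifying the last step as a cyclic permutation of the summation index and in locating where the hypotheses $m\leq d_{in}$ and $d_{out}\leq d_{in}$ are used, but the argument is the same.
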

	
	Then, when $m \in (d_{in}, \lfloor N/d_{out} \rfloor]$, Matrix Reducing evaluates $AB$ by horizontally packing multiple copies of $A$ into a $d_{out} \times m$ matrix.
	Suppose that $d_{in}$ exactly divides $m$ when $m > d_{in}$ for ease of discussion.
	The matrix product $AB$ can be vertically split into $d_{out} \times d_{in}$ matrices $\{AB_{(\cdot, o)}\}_{o=1}^{m / d_{in}}$, i.e., $AB = \big(AB_{(\cdot, 1)} \big\bracevert ...\big\bracevert AB_{(\cdot, m/d_{in})}\big)$, where $B_{(\cdot, o)}\!=\!B\big[1\!:\!d_{in},(o-1)\cdot d_{in}+1 \!:\! o\cdot d_{in}\big], \forall o \!\in\! [m/d_{in}]$.
	According to lemma \ref{lemma:correctness}, we have $AB = \big(\delta(A,B_{(\cdot, 1)}) \big\bracevert ...\big\bracevert \delta(A,B_{(\cdot, m / d_{in})})\big)$.
	Then, according to Lemma \ref{lemma:commutative}, we can conclude that $AB = \delta(\bar{A}, B)$,
	where $\bar{A}=(A \big\bracevert ... \big\bracevert A)$ is a $d_{out} \times m$ matrix that contains $m / d_{in}$ copies of $A$.
	Hence, as shown in Algorithm \ref{alg:mat_reduce}, Matrix Reducing evaluates $\llbracket AB \rrbracket$ by homomorphically computing $\delta(\bar{A}, B)$, which involves only $d_{in}$ homomorphic multiplications and no homomorphic rotation.
	
	\begin{lemma}[Composition]
		\label{lemma:commutative}
		Given any $d_{out} \times d_{in}$ matrix $A$, $d_{in} \times d_{in}$ matrix $B_1$, and $d_{in} \times m'$ matrix $B_2$, where $d_{in} \geq m'$, we have
		\begin{equation*}
			(\delta(A, B_1) \big\bracevert \delta(A, B_2)) = \delta((A \big\bracevert A), (B_1 \big\bracevert B_2))
		\end{equation*}
	\end{lemma}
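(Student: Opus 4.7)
The plan is to establish the lemma by direct entry-wise verification, reducing each case back to the correctness result (Lemma~\ref{lemma:correctness}). Let $\bar{A} = (A \big\bracevert A)$ and $\bar{B} = (B_1 \big\bracevert B_2)$. Using the definition of $\delta$, each entry of $\delta(\bar{A}, \bar{B})$ at position $[j, k]$ unfolds to a sum over $o \in [d_{in}]$ of products $\bar{A}[j, j+k+o-1] \cdot \bar{B}[j+k+o-1, k]$, where the column index of $\bar{A}$ is taken modulo $2 d_{in}$ and the row index of $\bar{B}$ modulo $d_{in}$. The precondition $d_{in} \geq m'$ ensures that $d_{in} + m' \leq 2 d_{in}$, so the concatenated shapes fit the requirements of $\delta$ and no index overflows past the two copies packed into $\bar{A}$.

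Next, I would split the target equality into two column blocks. For $k \in [1, d_{in}]$, the $k$-th column of $\bar{B}$ is simply the $k$-th column of $B_1$, so only $A$ and $B_1$ enter the sum. Since $\bar{A}$ is a horizontal repetition of $A$ with period $d_{in}$, accessing column $j + k + o - 1$ of $\bar{A}$ modulo $2 d_{in}$ yields the same value as accessing the corresponding column in $A$ modulo $d_{in}$. The entry-wise expression therefore matches what Lemma~\ref{lemma:correctness} gives for $\delta(A, B_1)[j, k] = AB_1[j, k]$. For $k \in [d_{in}+1, d_{in}+m']$, the same argument applies symmetrically: the $k$-th column of $\bar{B}$ is the $(k - d_{in})$-th column of $B_2$, the periodicity of $\bar{A}$ again collapses the index to $\bmod\ d_{in}$, and the sum reduces to $\delta(A, B_2)[j, k - d_{in}]$. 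Concatenating the two blocks horizontally yields exactly $(\delta(A, B_1) \big\bracevert \delta(A, B_2))$.

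The main obstacle is purely bookkeeping: I need to verify carefully that the cyclic-shift transformations $\xi^{(o-1)} \circ \sigma$ and $\psi^{(o-1)} \circ \tau$ commute correctly with horizontal concatenation, so that the modular arithmetic over $2 d_{in}$ in the indices of $\bar{A}$ truly collapses to the modular arithmetic over $d_{in}$ used inside $\delta(A, B_1)$ and $\delta(A, B_2)$. Once this index-matching is done for both cases, the rest is a direct application of Lemma~\ref{lemma:correctness} and no further matrix manipulation is required.
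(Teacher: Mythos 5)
Your proof is correct, but it takes a genuinely different route from the paper's. You verify the identity entrywise: writing $\bar{A}=(A \big\bracevert A)$ and $\bar{B}=(B_1 \big\bracevert B_2)$, you unfold $\delta(\bar{A},\bar{B})[j,k]$ into $\sum_{o=1}^{d_{in}} \bar{A}[j,\,j+k+o-1]\cdot \bar{B}[j+k+o-1,\,k]$, observe that the period-$d_{in}$ column structure of $\bar{A}$ collapses the mod-$2d_{in}$ column index to a mod-$d_{in}$ index into $A$ while $\bar{B}$'s row index is already mod $d_{in}$, and then split the columns $k$ into the blocks $[1,d_{in}]$ and $[d_{in}+1,d_{in}+m']$ to recover the unfolded forms of $\delta(A,B_1)[j,k]$ and $\delta(A,B_2)[j,k-d_{in}]$; this is exactly the bookkeeping needed and your stated precondition check ($d_{in}+m'\leq 2d_{in}$, $d_{in}$ divides $2d_{in}$) is the right one. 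The paper instead uses a doubling trick: it sets $B'=(B_1\big\bracevert B_2)$ and $B''=(B';B')$, notes that the transformed-and-reduced matrices of $\delta(A',B'')$ repeat with period $d_{in}$ so that $\delta(A',B'')=2\,\delta(A',B')$, and then applies Lemma \ref{lemma:correctness} at the doubled dimension ($2d_{in}$ in place of $d_{in}$) to get $\delta(A',B'')=A'B''=2AB'$, whence $\delta(A',B')=AB'=(AB_1\big\bracevert AB_2)$. Your argument is more elementary and self-contained (it never needs to re-instantiate the correctness lemma at doubled dimensions, and it proves the statement about $\delta$ directly rather than through the matrix products), at the cost of the modular index chase you flag; the paper's version is shorter once Lemma \ref{lemma:correctness} is granted in its general form. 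Both ultimately rest on the same observation, namely that the cyclic indexing convention makes $(A\big\bracevert A)$ behave like $A$ with period $d_{in}$, so the two proofs are equivalent in substance.
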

	
	\begin{figure}[t]
		\centering
		\includegraphics[scale=0.6]{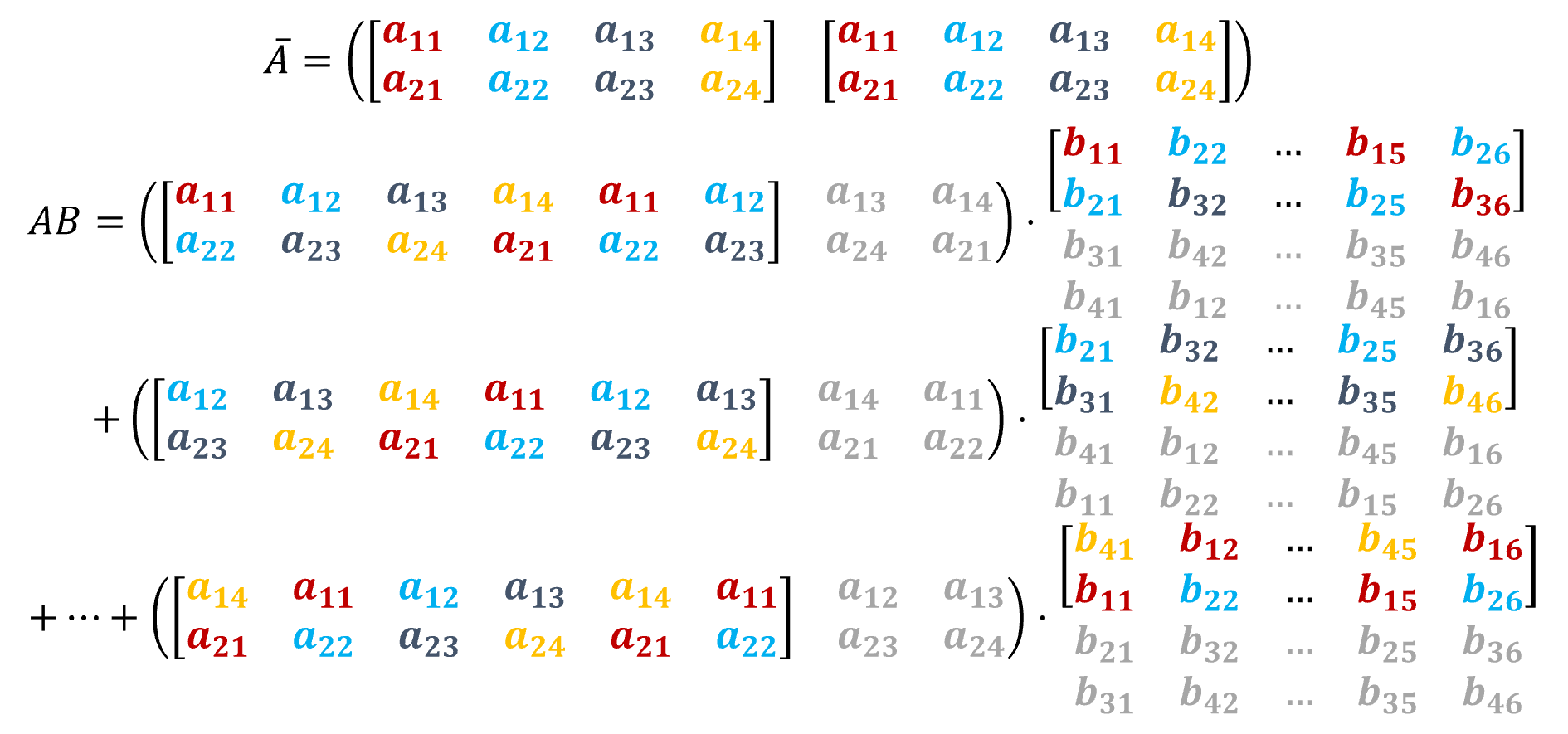}
		\caption{\changemarker{Matrix Reducing for matrix multiplication.}}
		\label{fig:mat_reduce}
	\end{figure}
	
	\begin{example}
		Figure \ref{fig:mat_reduce} depicts how Matrix Reducing works for a $2 \times 4$ matrix $A$ and a $4 \times 6$ matrix $B$, where $A[i,j]=a_{ij}$, and $B[i,j]=b_{ij}$. 
		First, we horizontally pack two copies of $A$ to derive a $2\times 8$ matrix $\bar{A}$.
		Then, we linearly transform $\bar{A}$ and $B$ and reduce the transformed matrices into $2 \times 6$ matrices.
		Finally, we apply entrywise operations over the reduced matrices to derive $AB$.
	\end{example}
	
	\subsection{SampleSkip}
	Finally, we propose to further accelerate secure SV calculation by reducing the number of test samples for secure model testing.
	Concretely, if some test samples are highly likely to be correctly predicted by a model, which we refer to as \textit{skippable samples}, we can skip these samples during model testing to save time.
	We  then identify skippable samples:
	\textit{If a sample can be correctly predicted by two models, it can also be correctly predicted by their aggregation, and thus, it is skippable;
		otherwise, it is a discriminative sample that distinguishes the models' utilities and the clients' contributions.} 
	
	\begin{figure}
		\small
		\begin{minipage}{\columnwidth}
			\removelatexerror
			\begin{algorithm}[H]
				\caption{SampleSkip}
				\begin{algorithmic}[1]
					\REQUIRE encrypted models $\{\llbracket \theta^t_i \rrbracket\}_{\forall i \in I^t}$, secret shares $\mathcal{D}', \mathcal{D}''$ of $\mathcal{D}$
					\ENSURE model utilities $\{v(\theta^t_S)\}_{\forall S \subseteq I^t, |S|>0} $
					\STATE Server $\mathcal{P}$: Initialize a dictionary $\boldsymbol{\Phi}^t$.
					\FOR {$i \in I^t$}
					\STATE Server $\mathcal{P}$: Run $\Pi_{Sec}(\llbracket \theta^t_i \rrbracket, (\mathcal{D}', \mathcal{D}''))$ to obtain $\Phi^t_{\{i\}}$ 
					\STATE Server $\mathcal{P}$: Calculate $v(\theta^t_i) = |\Phi^t_{\{i\}}|/M$
					\ENDFOR
					\FOR {$j=2$ to $j=|I^t|$}
					\FOR {$S \subseteq I^t, |S| = j$}
					\STATE Servers $\mathcal{P}, \mathcal{A}$: Calculate $\llbracket \theta^t_S \rrbracket$ under HE
					\STATE Server $\mathcal{P}$: Compute IDs $\Psi^t_{S}=\text{FindSkippable}(S, \boldsymbol{\Phi}^t)$.
					\STATE Servers $\mathcal{P}, \mathcal{A}$: Drop skippable samples $\Psi^t_{S}$ from $(\mathcal{D}', \mathcal{D}'')$ to obtain discriminative samples $(\mathcal{D}'_{-\Psi^t_{S}}, \mathcal{D}''_{-\Psi^t_{S}})$ 
					\STATE Server $\mathcal{P}$: Run $\Pi_{Sec}(\llbracket \theta^t_S \rrbracket, (\mathcal{D}'_{-\Psi^t_{S}}, \mathcal{D}''_{-\Psi^t_{S}}))$ to obtain $\Phi^t_{S}$ 
					\STATE Server $\mathcal{P}$: $\Phi^t_{S} \gets \Phi^t_{S} \cup \Psi^t_{S}$
					\STATE Server $\mathcal{P}$: Calculate $v(\theta^t_S) = |\Phi^t_{S}|/M$
					\ENDFOR
					\ENDFOR
					\RETURN $\{v(\theta^t_S)\}_{\forall S \subseteq I^t, |S|>0} $
				\end{algorithmic}
				\label{alg:sampleskip}
			\end{algorithm}
		\end{minipage}
		\begin{minipage}{\columnwidth}
			\removelatexerror
			\begin{algorithm}[H]
				\caption{FindSkippable}
				\begin{algorithmic}[1]
					\REQUIRE a set $S$ of clients, a dictionary $\Phi^t$ 
					\ENSURE IDs $\Psi^t_S$ of skippable samples for model $\theta^t_S$
					\STATE $\Psi^t_S \gets \emptyset$
					\FOR{each pair of nonempty subsets $(S', S \setminus S')$ of $S$}
					\IF{$\Phi^t_{S'}, \Phi^t_{S \setminus S'}$ exist}
					\STATE $\Psi^t_S \gets \Psi^t_S \cup (\Phi^t_{S'}\cap \Phi^t_{S \setminus S'})$
					\ENDIF
					\ENDFOR
					\RETURN $\Psi^t_S$
				\end{algorithmic}
				\label{alg:find_skip}
			\end{algorithm}
		\end{minipage}
	\end{figure}
	
	Hence, we propose \textit{SampleSkip} to speed up secure SV calculation by skipping test samples.
	Let $\Phi^t_{S}$ denote an ID set that contains the IDs of the test samples that can be correctly predicted by model $\theta^t_{S}$, where $S\subseteq I^t$ is a subset of clients.
	As shown in Algorithm \ref{alg:sampleskip}, in each round $t$, server $\mathcal{P}$ first initializes a dictionary $\boldsymbol{\Phi}^t$ to record $\Phi^t_{S}$ for all subsets $S\subseteq I^t$ (Step 1).
	For each local model $\theta^t_i, i\in I^t$, the servers test it on the entire test set $(\mathcal{D}', \mathcal{D}'')$ and record $\Phi^t_{\{i\}}$ (Step 3).
	Then, for each aggregated model $\theta^t_{S}, |S|\geq 2$, the servers drop the skippable samples $\Psi^t_{S}$ from $(\mathcal{D}', \mathcal{D}'')$ and test the model on the discriminative samples
	$(\mathcal{D}'_{-\Psi^t_{S}}, \mathcal{D}''_{-\Psi^t_{S}})$ (Steps 8-10);
	the skippable samples $\Psi^t_{S}$ are counted as correctly predicted samples (Step 11).
	Note that SampleSkip also applies to nonsecure SV calculation.
	
	We also propose Algorithm \ref{alg:find_skip} that identifies skippable samples for aggregated models.
	Given that an aggregated model $\theta^t_{S}$ can be aggregated from multiple pairs of models, we need to find the union of the skippable samples determined by each pair of models.
	Hence, Algorithm \ref{alg:find_skip} enumerates all possible pairs of nonempty subsets $(S', S\setminus S')$ of the given set $S$, identifies skippable samples for each pair, and returns the union set of skippable samples;
	this suggests that SampleSkip should iterate over all the subsets $S\subseteq I^t, |S|>1$ in ascending order of their sizes such that $\Phi^t_{S'}$ and $\Phi^t_{S \setminus S'}$ are already recorded in $\Phi^t$ for identifying $\Psi^t_{S}$.
	Notably, although this algorithm takes $O(2^{|S|})$ complexity, it hardly burdens SecSV because the time for running it is far less than that for evaluating models under HE.
	
	\begin{example}
		Consider $3$ clients and $4$ test samples $x_1,...,x_4$. 
		If models $\theta^t_{\{1\}}, \theta^t_{\{2, 3\}}$ correctly predict $x_1, x_2$, models $\theta^t_{\{2\}}, \theta^t_{\{1, 3\}}$ correctly predict $x_2, x_3$, and models $\theta^t_{\{3\}}, \theta^t_{\{1, 2\}}$ correctly predict $x_3, x_4$, then the servers can skip all the test samples when testing $\theta^t_{\{1, 2, 3\}}$.
	\end{example}
	
	\section{Theoretical Analysis and Discussion}
	\subsection{Theoretical Analysis}
	\subsubsection{Time cost}
	Table \ref{tab:matmul} summarizes the time complexities of Matrix Squaring and Matrix Reducing.
	Note that the batch size $m$ is a hyperparameter set by the server(s);
	the requirements on $m$ suggest the maximum batch size such that a batch of samples can be stored in a ciphertext, which maximizes computation efficiency.
	Considering that the maximum batch size may differ between the two methods, the complexities should be averaged over the maximum batch sizes for a fair comparison.
	Evidently, Matrix Reducing outperforms Matrix Squaring.
	First, unlike Matrix Squaring, Matrix Reducing does not require any homomorphic rotation.
	Regarding homomorphic multiplications, Matrix Squaring and Matrix Reducing take $O(d_{in} \cdot d_{out} / \sqrt{N} )$ and $O(d_{in})$ complexity for evaluating $m \leq \min\{d_{in},\lfloor \sqrt{N} \rfloor\}$ and $m \leq \lfloor N/d_{out} \rfloor$ samples, respectively;
	thus, when we batch $\min\{d_{in},\lfloor \sqrt{N} \rfloor\}$ samples for the former method and $\lfloor N/d_{out} \rfloor$ samples for the latter, the latter's complexity $O(d_{in}\cdot d_{out}/N)$ for each sample is better than the former's complexity $O(d_{in} \cdot d_{out} / \sqrt{N} / \min\{d_{in},\lfloor \sqrt{N} \rfloor\})$.
	The uses of the different matrix multiplication methods result in the different time costs of HESV and SecSV, as characterized below.
	
	\begin{table}[t]
		\small
		\centering
		\caption{Comparison of time cost for evaluating $AB$.}
		\begin{tabular}{|l|l|l|}
			\hline
			& Matrix Squaring & Matrix Reducing \\ \hline
			Batch size $m$ & $m \leq \min\{d_{in},\lfloor \sqrt{N} \rfloor\}$ & $m \leq \lfloor N/d_{out} \rfloor$ \\ \hline
			Complexity of HMult & $O(d_{in} \cdot d_{out} / \sqrt{N})$ & $O(d_{in})$ \\ \hline
			Complexity of HRot & $O(d_{in}/(d_{out} \bmod \sqrt{N}))$ & 0 \\ \hline
		\end{tabular}
		\label{tab:matmul}
	\end{table}
	
	\begin{lemma}
		\label{lemma:time_cost}
		Consider models with $L$ layers of matrix multiplication where the input and output sizes of layer $l$ are $d^{(l)}_{in}$ and $d^{(l)}_{out}$, respectively.
		For secure SV calculation, HESV needs $O( \frac{2^n\cdot M \cdot T \cdot (\sum_{l=1}^{L} d^{(l)}_{in} \cdot d^{(l)}_{out})}{\min\{d^{(1)}_{in},...,d^{(L)}_{in},\sqrt{N}\} \cdot \sqrt{N}})$ HMults and $O( \frac{2^n\cdot M \cdot T \cdot \sum_{l=1}^{L} \frac{d_{in}}{d_{out} \bmod \sqrt{N}}}{\min\{d^{(1)}_{in},...,d^{(L)}_{in},\sqrt{N}\}} )$ HRots, while SecSV needs $O( \frac{2^n\cdot M \cdot T \cdot \max\{d^{(1)}_{out},...,d^{(L)}_{out}\} \sum_{l=1}^{L} d^{(l)}_{in}}{N})$ HMults and no HRot.
	\end{lemma}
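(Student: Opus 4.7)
The plan is to derive each of the four bounds by combining three ingredients drawn directly from the paper: (i) the per-layer, per-batch HMult and HRot counts listed in Table \ref{tab:matmul}; (ii) the feasible batch size implied by those same per-layer bounds; and (iii) the total number of model evaluations induced by the FSV computation in Algorithms \ref{alg:hesv} and \ref{alg:secsv}.

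First, I would fix a single aggregated model and a single batch of samples, and sum the per-layer costs. For Matrix Squaring under HESV this yields $O\bigl(\sum_{l=1}^{L} d^{(l)}_{in} d^{(l)}_{out}/\sqrt{N}\bigr)$ HMults and $O\bigl(\sum_{l=1}^{L} d^{(l)}_{in}/(d^{(l)}_{out} \bmod \sqrt{N})\bigr)$ HRots per batch; for Matrix Reducing under SecSV it yields $O\bigl(\sum_{l=1}^{L} d^{(l)}_{in}\bigr)$ HMults and zero HRots per batch. Because a batch propagates unchanged through every layer, the feasible batch size is the minimum of the per-layer ranges listed in the first row of Table \ref{tab:matmul}: $B_{HE}=\min\{d^{(1)}_{in},\ldots,d^{(L)}_{in},\lfloor\sqrt{N}\rfloor\}$ for Matrix Squaring and $B_{Sec}=\lfloor N/\max\{d^{(1)}_{out},\ldots,d^{(L)}_{out}\}\rfloor$ for Matrix Reducing. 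Dividing the total number of test samples $M$ by the batch size gives the number of batches per model evaluation, so the per-model complexity is the per-batch cost scaled by $M/B_{HE}$ or $M/B_{Sec}$.

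Next, I would count model evaluations. Algorithm \ref{alg:hesv} tests every non-empty subset $S\subseteq I^t$ in each of the $T$ rounds, and the worst case $|I^t|=n$ gives $O(2^n T)$ evaluations; Algorithm \ref{alg:secsv} shares this worst case when SampleSkip is disabled or when no samples happen to be skippable. Multiplying the $O(2^n T)$ factor by the per-model cost from the previous step, and absorbing into $O(\cdot)$ the difference between $1/\sqrt{N}$ times the exact per-layer bound and the quantity $1/(B_{HE}\sqrt{N})$, reproduces the four expressions in the lemma.

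The main source of potential confusion is bookkeeping rather than any deep argument: one must verify that the batch size is dictated by the tightest layer, which is precisely why the minimum over $d^{(l)}_{in}$ appears in the HESV denominators and the maximum over $d^{(l)}_{out}$ appears inside the SecSV denominator. A secondary sanity check is that the subset enumeration and dictionary maintenance inside Algorithms \ref{alg:sampleskip} and \ref{alg:find_skip} contribute only $O(2^n)$ plaintext operations per round, which is dominated by the $2^n$ HE-based model evaluations and therefore does not affect any of the asymptotic bounds.
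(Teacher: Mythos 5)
Your proposal is correct and follows essentially the same route as the paper's own proof: per-layer HMult/HRot counts from the matrix-multiplication analysis, a batch size constrained by the tightest layer ($\min_l d^{(l)}_{in}$ capped by $\sqrt{N}$ for Matrix Squaring, $N/\max_l d^{(l)}_{out}$ for Matrix Reducing), $M$ divided by that batch size to count batches per model, and $O(2^n T)$ model evaluations overall. The only additions beyond the paper's argument are the explicit remark that the worst case is $|I^t|=n$ and the observation that the plaintext bookkeeping in SampleSkip is asymptotically negligible, both of which are harmless.
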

	
	\subsubsection{Error of SampleSkip}
	According to Theorem \ref{thm:sampleskip}, for \textit{linear classifiers}, e.g., the widely-used logistic classifiers and linear SVM classifiers, the intuition behind SampleSkip holds. 
	
	\begin{definition}[Linear classifier]
		\label{def:linear}
		A linear classifier $f:R^d \to R^c$ is of the form $f_{\theta}(x) = \eta(\theta^w x + \theta^b)$, where model $\theta=(\theta^w \big \bracevert \theta^b)$ consists of a $c\times d$ matrix $\theta^w$ of weights and a $c$-sized vector $\theta^b$ of biases, and $\eta: R^c \to R^c$ can be any entrywise strictly increasing function: given any $c$-sized vector $v$, for any $j_1, j_2 \in [c]$, if $v[j_1] > v[j_2]$, we have $\eta(v)[j_1] > \eta(v)[j_2]$.
	\end{definition}
	\begin{theorem}
		\label{thm:sampleskip}
		For any linear classifier $f$, any test sample $(x, y)$, and any two models $\theta_{S_1}, \theta_{S_2}$, if they satisfy $\argmax_{j} f_{\theta_{S_1}}(x)[j] = \argmax_{j} f_{\theta_{S_2}}(x)[j]=y$, then for any $\theta_{S_1 \cup S_2} = \omega_{S_1} \cdot \theta_{S_1} + \omega_{S_2} \cdot \theta_{S_2}$ where $\omega_{S_1}, \omega_{S_2} \geq 0$, we have $\argmax_{j} f_{\theta_{S_1 \cup S_2}}(x)[j] = y$.
	\end{theorem}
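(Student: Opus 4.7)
The plan is to peel off the nonlinearity, reduce the claim to a statement about the linear scores $\theta^w x + \theta^b$, and then exploit the fact that aggregation acts linearly on $(\theta^w, \theta^b)$ jointly with strict monotonicity of $\eta$.

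First I would observe that by Definition \ref{def:linear}, $\eta$ is entrywise strictly increasing, so it preserves the order of the coordinates of its input vector. In particular, for any $\theta$ and any $x$,
\begin{equation*}
\argmax_{j} f_{\theta}(x)[j] \;=\; \argmax_{j}\bigl(\theta^w x + \theta^b\bigr)[j].
\end{equation*}
This lets me rewrite the hypothesis as: for $i \in \{1,2\}$ and every $j \neq y$, the ``raw'' score $s_i := \theta_{S_i}^w x + \theta_{S_i}^b$ satisfies $s_i[y] > s_i[j]$.

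Next I would use that aggregation is performed coordinatewise on the concatenated model $(\theta^w \big\bracevert \theta^b)$, so $\theta_{S_1\cup S_2}^w = \omega_{S_1}\theta_{S_1}^w + \omega_{S_2}\theta_{S_2}^w$ and $\theta_{S_1\cup S_2}^b = \omega_{S_1}\theta_{S_1}^b + \omega_{S_2}\theta_{S_2}^b$. By linearity of matrix--vector multiplication, the raw score of the aggregated model equals the matching nonnegative combination of the individual scores:
\begin{equation*}
s := \theta_{S_1\cup S_2}^w x + \theta_{S_1\cup S_2}^b \;=\; \omega_{S_1}\, s_1 + \omega_{S_2}\, s_2.
\end{equation*}
Now for every $j \neq y$, combining $s_1[y] > s_1[j]$ and $s_2[y] > s_2[j]$ with weights $\omega_{S_1},\omega_{S_2}\geq 0$ (at least one strictly positive, which is implicit since an all-zero aggregation weight vector would give a degenerate model the FL framework does not use) yields $s[y] > s[j]$. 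Hence $\argmax_{j} s[j] = y$, and applying $\eta$ on both sides preserves this argmax, giving $\argmax_{j} f_{\theta_{S_1\cup S_2}}(x)[j] = y$.

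The argument is essentially three short observations chained together, so I do not expect any serious obstacle. The only mildly subtle point is the monotonicity step at the start: I would make sure to justify why strictly increasing $\eta$ applied entrywise preserves strict inequalities between coordinates (and hence the argmax), rather than taking it as obvious. The degenerate case $\omega_{S_1}=\omega_{S_2}=0$ deserves a one-line remark excluding it, after which strict inequality is preserved by the nonnegative combination and the conclusion follows cleanly.
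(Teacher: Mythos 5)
Your proposal is correct and follows essentially the same route as the paper's own proof: use strict entrywise monotonicity of $\eta$ to reduce the argmax condition to the raw scores $\theta^w x + \theta^b$, observe that the aggregated model's raw score is the corresponding nonnegative combination of the two individual raw scores, and conclude that the strict inequalities at coordinate $y$ are preserved. Your explicit handling of the degenerate case $\omega_{S_1}=\omega_{S_2}=0$ is a small point of care that the paper's proof silently omits, but otherwise the two arguments coincide.
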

	
	For nonlinear classifiers, this intuition is almost correct according to our experiments, and Lemma \ref{lemma:sampleskip} demonstrates the impact of the incorrectness on the estimated FSVs.
	Let $\hat{v}(\theta^t_{S})$ denote the utility of aggregated model $\theta^t_{S}$ estimated under SampleSkip.
	Then, for all aggregated models $\theta^t_{S}$, the incorrectness of SampleSkip can be measured by the percentage $\Delta v^t_{S}= \hat{v}(\theta^t_{S}) - v(\theta^t_{S})$ of wrongly skipped test samples, which is upper bounded by $\Delta v_{max}$.
	Lemma \ref{lemma:sampleskip} implies that $\Delta v_{max}$ affects the error $|\hat{\phi}_i - \phi_i|$ of the estimated FSV $\hat{\phi}_i$, and clients can infer the upper bound of the error by estimating $\Delta v_{max}$ with a small test set.
	Note that this upper bound is loose, and the exact error might be considerably smaller than it.
	\begin{lemma}
		\changemarker{
			\label{lemma:sampleskip}
			For each client $i$, the error $|\hat{\phi}_i - \phi_i|$ of her FSV $\hat{\phi}_i$ estimated under SampleSkip is upper bounded by $T\cdot\Delta v_{max}$. 
		}
	\end{lemma}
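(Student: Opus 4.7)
The plan is to expand the federated SV into a sum over rounds, apply the Shapley formula inside each round, and bound the per-round error by a single factor of $\Delta v_{max}$ by exploiting the one-sidedness of the SampleSkip bias. First I would write
$$\hat\phi_i - \phi_i = \sum_{t\in[T]}\bigl(\hat\phi^t_i - \phi^t_i\bigr),$$
so that, by the triangle inequality, it suffices to show $|\hat\phi^t_i - \phi^t_i|\le\Delta v_{max}$ for every round $t$. Rounds with $i\notin I^t$ contribute zero to both SSVs, so only the case $i\in I^t$ needs treatment.

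For such a round, expanding the SSV formula and writing $\Delta v(\theta^t_S) = \hat v(\theta^t_S) - v(\theta^t_S)$ gives $\hat\phi^t_i - \phi^t_i = A_i^t - B_i^t$, where
$$A_i^t = \sum_{S\subseteq I^t\setminus\{i\}}\frac{|S|!(|I^t|-|S|-1)!}{|I^t|!}\,\Delta v(\theta^t_{S\cup\{i\}}),$$
$$B_i^t = \sum_{S\subseteq I^t\setminus\{i\}}\frac{|S|!(|I^t|-|S|-1)!}{|I^t|!}\,\Delta v(\theta^t_{S}).$$
The key observation is that the SampleSkip bias is one-sided: the algorithm only counts a skippable sample as correctly predicted, never the other way around, so $\Delta v(\theta^t_{S'})\in[0,\Delta v_{max}]$ for every $S'\subseteq I^t$ (in fact $\Delta v$ vanishes when $|S'|\le 1$, since those models are tested on the full test set). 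Combined with the standard Shapley identity $\sum_{S\subseteq I^t\setminus\{i\}}\tfrac{|S|!(|I^t|-|S|-1)!}{|I^t|!}=1$, this makes both $A_i^t$ and $B_i^t$ convex combinations of nonnegative numbers each at most $\Delta v_{max}$, hence $A_i^t,B_i^t\in[0,\Delta v_{max}]$.

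It follows that $|\hat\phi^t_i - \phi^t_i| = |A_i^t - B_i^t|\le\Delta v_{max}$, and summing across the $T$ rounds yields the claimed bound $|\hat\phi_i - \phi_i|\le T\cdot\Delta v_{max}$. The main obstacle is precisely this nonnegativity step: a naive triangle inequality applied inside $\Delta v(\theta^t_{S\cup\{i\}})-\Delta v(\theta^t_S)$ would only give $2T\cdot\Delta v_{max}$, so the argument has to notice that $A_i^t$ and $B_i^t$ both lie in the single interval $[0,\Delta v_{max}]$ in order to recover the tighter factor $T$ stated in the lemma.
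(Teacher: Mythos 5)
Your proof is correct and follows essentially the same route as the paper's: decompose the FSV error over rounds, regroup the per-round SSV difference into a Shapley-weighted combination of the per-model biases $\Delta v(\theta^t_{S\cup\{i\}})-\Delta v(\theta^t_S)$, and use the fact that the Shapley coefficients sum to one together with $\Delta v \in [0,\Delta v_{max}]$ to get a single factor of $\Delta v_{max}$ per round. If anything, you are more explicit than the paper about the step that matters — the paper bounds $|\Delta v(\theta^t_{S\cup\{i\}})-\Delta v(\theta^t_S)|\le\Delta v_{max}$ without spelling out that this relies on the one-sided, nonnegative nature of the SampleSkip bias, which your $A_i^t,B_i^t\in[0,\Delta v_{max}]$ argument makes clear.
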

	\subsubsection{Security}
	\label{sec:security}
	Ensuring that no private information can be learned from our system is impossible because the FSVs themselves are knowledge of clients' private models and test data.
	Conversely, we focus on some well-known attacks. 
	A recent survey \cite{lyu2022privacy} suggests that the test phase of FL is threatened by model stealing and membership inference.
	Hence, we analyze the security of our systems against the membership inference attacks (simply \textit{MI attacks} \cite{shokri2017membership, hu2022membership}), the equation-solving model stealing attacks (simply \textit{ES attacks} \cite{tramer2016stealing}), and the retraining-based model stealing attacks (simply \textit{retraining attacks}, e.g., \cite{tramer2016stealing, papernot2017practical, orekondy2019knockoff, juuti2019prada}).
	The ES attacks can also be adapted for stealing test data (see Definition \ref{def:es_attack}).
	\begin{definition}[Equation-solving attack]
		\label{def:es_attack}
		Let $f^{(l_1:l_2)}$ denote the inference function that given input features for layer $l_1$ yields output features for layer $l_2$, where $l_1 \leq l_2$.
		Given input features $X^{(l_1)}$ for layer $l_1$ and output features $\hat{Y}^{(l_2)}$ for layer $l_2$, an ES attack obtains the model parameters $\theta^{(l_1)},...,\theta^{(l_2)}$ of layers $l_1,...,l_2$ by solving a system of equations $\hat{Y}^{(l_2)} = f^{(l_1:l_2)}(X^{(l_1)};\{\theta^{(l)}\}_{l=l_1}^{l_2})$.
		We refer to it as a single-layer ES attack if $l_1=l_2$; otherwise, as a multi-layer ES attack.
		Given model parameters $\theta^{(1)},...,\theta^{(l_2)}$ for layers $1,...,l_2$ and output features $\hat{Y}^{(l_2)}$ for layer $l_2$, an ES data stealing attack learns the input features $X^{(1)}$ by solving $\hat{Y}^{(l_2)} = f^{(1:l_2)}(X^{(1)};\{\theta^{(l)}\}_{l=1}^{l_2})$.
	\end{definition}
	\begin{definition}[Membership inference attack]
		\label{def:mi_attack}
		Given a test sample $(x,y)$ and its predicted label $\hat{y} = \argmax_{j} f_{\theta}(x)[j]$, an MI attack determines whether $(x,y)$ was in the training data of model $\theta$.
	\end{definition}
	
	\begin{definition}[Retraining attack]
		\label{def:retrain_attack}
		Given the features $X$ of some test samples and the predicted labels $\hat{Y}=\mathbf{Argmax}(f_{\theta}(X))$, a retraining attack learns a clone model $f'$ such that $\mathbf{Argmax}(f'(X)) \approx \hat{Y}$.
	\end{definition}
	\begin{proposition}
		\label{prop:security}
		Assume that linear layers are one-way functions, i.e., for all PPT adversaries $\mathcal{A}$ and for all linear layers $l$, there is a negligible $\epsilon$ such that $Pr[\mathcal{A}(\hat{Y}^{(l)}) = \theta^{(l)}] \leq \epsilon$ and $Pr[\mathcal{A}(\hat{Y}^{(l)}) = X^{(l)}] \leq \epsilon$ over all possible $(\theta^{(l)}, X^{(l)})$, where $\hat{Y}^{(l)}\! = \!lin^{(l)}(\theta^{(l)}, X^{(l)})$.
		Under SecSV or HESV, if $n \geq 4$, all PPT adversaries cannot apply single-layer ES attacks, ES data stealing attacks, MI attacks, or retraining attacks;
		if $n \geq L + 2$, they cannot apply multi-layer ES attacks.
	\end{proposition}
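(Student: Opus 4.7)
The plan is to enumerate the plaintext views of every honest-but-curious party --- the two servers and each client --- during one execution of secure model testing, and then show for each of the four attack types that no single party holds the combination of plaintexts that the attack requires. The key structural observations I would establish first are: by HE's semantic security (and, in SecSV, ASS's information-theoretic hiding), neither server ever obtains a plaintext matrix of input features, output features, or aggregated model parameters; the data-owning client for a test sample has $(X, Y)$ but no decrypted intermediate outputs of her own sample's forward pass; each client $i_{l+1}$ sees only the decrypted output $\hat{Y}^{(l)}$ of layer $l$ (and computes $X^{(l+1)}$ from it) but never $X^{(l)}$, $\theta^{(l)}$, or the labels $Y$; and client $i_{L+2}$ sees only a per-sample correctness indicator, not $\hat{Y}$ or $Y$ individually. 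Each local model $\theta^t_i$ is known only to client $i$, while every aggregated model $\theta^t_S$ is computed entirely over ciphertexts, so its plaintext is unknown to all parties.

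Using these views, I would then dispatch each attack in turn. For a single-layer ES attack on layer $l$, the adversary needs both $X^{(l)}$ and $\hat{Y}^{(l)}$ in plaintext; these are held by $i_l$ and $i_{l+1}$ respectively, the protocol enforces $i_l \neq i_{l+1}$, and by the one-wayness hypothesis neither party can PPT-invert her view to obtain the other. For a multi-layer ES attack from layer $l_1$ to $l_2$, the adversary needs $X^{(l_1)}$ and $\hat{Y}^{(l_2)}$, held by $i_{l_1}$ and $i_{l_2+1}$; taking $n \geq L+2$ lets the protocol assign pairwise distinct clients $i_1, i_2, \ldots, i_{L+2}$, so no client covers both ends of any contiguous block of layers. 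For an ES data-stealing attack the adversary additionally needs the plaintext aggregated parameters $\{\theta^{(l)}\}_{l=1}^{l_2}$, which no single party holds. For an MI attack the adversary needs a triple $(x, y, \hat{y})$: the data owner has $(x, y)$, client $i_{L+1}$ has $\hat{Y}$, client $i_{L+2}$ has only correctness bits, and $n \geq 4$ suffices to keep these roles separate. A retraining attack requires many $(x, \hat{y})$ pairs and is ruled out by the same separation.

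The main obstacle I anticipate is the bookkeeping of the two different thresholds together with a careful invocation of one-wayness: I have to argue that $n \geq 4$ really is enough to simultaneously satisfy every adjacency constraint for single-layer ES and the data-owner/predictor/comparator disjointness for MI and retraining, while $n \geq L+2$ becomes tight for multi-layer ES because the adversary can take $(l_1, l_2) = (1, L)$. A second subtlety concerns SecSV's shares: server $\mathcal{P}$ sees $X'^{(l)}$ and $Y'$, which are uniform and hence independent of the underlying plaintext, and only the non-colluding assumption between the two servers prevents reconstruction; this should be made explicit so the reduction to HE IND-CPA and ASS hiding is clean. Finally, to rule out even partial-information leakage, I would invoke one-wayness of linear layers to preclude a client who sees $\hat{Y}^{(l)}$ from recovering $X^{(l)}$ or $\theta^{(l)}$ that she could then combine with any other piece of her view.
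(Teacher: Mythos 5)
Your proposal is correct and follows essentially the same route as the paper's proof: enumerate each party's plaintext view (servers see only ciphertexts, shares, and correctness counts; client $i_{l+1}$ sees only $\hat{Y}^{(l)}$), observe that the protocol's role assignments ($i_l \neq i_{l+1}$, $i_{L+1} \neq i_1$, pairwise-distinct clients when $n \geq L+2$) ensure no single party holds all inputs any attack requires, and invoke the one-wayness assumption to bound the probability of reconstructing the missing piece. The paper formalizes the last step as a chain of conditional probabilities each bounded by $\epsilon$, which is exactly the "careful invocation of one-wayness" you anticipate.
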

	Proposition \ref{prop:security} characterizes different security levels of our protocols with different numbers of clients.
	Intuitively, to defend against single-layer ES attacks, for each layer $l$, HESV and SecSV should ensure that clients $i_{l}, i_{l+1}$ are different entities such that they cannot know both the input $X^{(l)}$ and output $\hat{Y}^{(l)}$ to infer model parameters $\theta^{(l)}$.
	Similarly, clients $i_1,...,i_{L+1}$ should not be the owner of the model under testing, which prevents ES data stealing attacks;
	client $i_{L+1}$ who obtains the predicted labels $\hat{Y}$ should be different from client $i_{1}$ (the owner of the batch of samples) such that they cannot apply MI and retraining attacks.
	Note that if we have sufficient clients, SecSV can batch samples from different clients together;
	in this case, client $i_{1}$ represents the owners of the batched samples.
	For multi-layer ES attacks, HESV and SecSV require at least $L + 2$ clients to participate, which may be infeasible when $L$ is large.
	However, because our clients are honest-but-curious, we can consider a lower security level where only $4$ clients are required by ignoring the multi-layer ES attack whose success relies on a large number of random queries over the inference function $f^{(l_1:l_2)}$ by an active attacker;
	we can also slightly modify HESV and SecSV by applying garbled circuits to evaluate activation functions securely \cite{juvekar2018gazelle}, which prevents the above attacks for any number of clients.
	
	\begin{table*}[ht]
		\small
		\caption{Running time and error in detail. \changemarker{Brackets are used to indicate results under parallelization.}}
		\label{tab:eval}
		\begin{tabular}{ccccccccccc}
			\hline
			\multirow{2}{*}{\begin{tabular}[c]{@{}c@{}}Dataset\\ (model)\end{tabular}}   & \multirow{2}{*}{Protocol} & \multicolumn{6}{c}{Running time (s)}                                        & \multirow{2}{*}{\begin{tabular}[c]{@{}c@{}}Speedup w.r.t.\\ HESV (par.)\end{tabular}} & \multirow{2}{*}{\begin{tabular}[c]{@{}c@{}}Slowdown w.r.t.\\ NSSV (par.)\end{tabular}} & \multirow{2}{*}{Error} \\ \cline{3-8}
			&                           & Total (par.)       & Arithmetic & Enc.    & Dec.    & Comm.   & Shares gen. &                                                                                       &                                                                                        &                        \\ \hline
			\multirow{3}{*}{\begin{tabular}[c]{@{}c@{}}AGNEWS\\ (LOGI)\end{tabular}} & HESV                      & $4400$ ($460$)     & $3860$     & $121$   & $50.48$ & $96.25$ & $0.00$      & $1\times$ ($1\times$)                                                                 & $28.3\times$ ($29.0\times$)                                                            & $2.84\times 10^{-3}$   \\
			& SecSV                     & $1054$ ($126$)     & $1005$     & $6.61$  & $1.77$  & $8.53$  & $7.84$      & $4.2\times$ ($3.6\times$)                                                             & $6.8\times$ ($8.0\times$)                                                              & $9.64\times 10^{-4}$   \\
			& SecretSV                  & $3698$ ($376$)     & $146$      & $0.00$  & $0.00$  & $256$   & $1713$      & $1.2\times$ ($1.2 \times$)                                                            & $23.8\times$ ($23.7\times$)                                                            & $2.46\times 10^{-3}$   \\ \hline
			\multirow{3}{*}{\begin{tabular}[c]{@{}c@{}}BANK\\ (LOGI)\end{tabular}}   & HESV                      & $2934$ ($302$)     & $2255$     & $149$   & $64.99$ & $124$   & $0.00$      & $1\times$ ($1\times$)                                                                 & $13.5\times$ ($12.5\times$)                                                            & $2.13\times 10^{-3}$   \\
			& SecSV                     & $137$ ($18.75$)    & $113$      & $1.08$  & $1.14$  & $4.06$  & $7.81$      & $21.4\times$ ($16.1\times$)                                                           & $0.6\times$ ($0.8\times$)                                                              & $8.86\times 10^{-4}$   \\
			& SecretSV                  & $791$ ($81.04$)    & $26.64$    & $0.00$  & $0.00$  & $55.17$ & $376$       & $3.7\times$ ($3.7 \times$)                                                            & $3.6\times$ ($3.3 \times$)                                                             & $1.00\times 10^{-3}$   \\ \hline
			\multirow{3}{*}{\begin{tabular}[c]{@{}c@{}}MNIST\\ (CNN)\end{tabular}}       & HESV                      & $274470$ ($27600$) & $76384$    & $87610$ & $290$   & $29658$ & $0.00$      & $1\times$ ($1\times$)                                                                 & $276.9\times$ ($274.2\times$)                                                          & $6.98\times 10^{-4}$   \\
			& SecSV                     & $39310$ ($3992$)   & $32468$    & $11.27$ & $199$   & $592$   & $2299$      & $7.0\times$ ($6.9\times$)                                                             & $39.7\times$ ($39.7\times$)                                                            & $9.28\times 10^{-4}$   \\
			& SecretSV                  & $158407$ ($16003$) & $3751$     & $0.00$  & $0.00$  & $8933$  & $121393$    & $1.7\times$ ($1.7\times$)                                                             & $159.8\times$ ($159.0\times$)                                                          & $1.20\times 10^{-3}$   \\ \hline
			\multirow{3}{*}{\begin{tabular}[c]{@{}c@{}}miRNA-mRNA\\ (RNN)\end{tabular}}  & HESV                      & $411628$ ($41691$) & $190190$   & $61528$ & $738$   & $21481$ & $0.00$      & $1\times$ ($1\times$)                                                                 & $47.9\times$ ($48.4\times$)                                                            & $1.40\times 10^{-2}$   \\
			& SecSV                     & $77081$ ($7825$)   & $30215$    & $656$   & $321$   & $1288$  & $3050$      & $5.3\times$ ($5.3\times$)                                                             & $9.0\times$ ($9.1\times$)                                                              & $1.70\times 10^{-2}$   \\
			& SecretSV                  & $38567$ ($3902$)   & $1935$     & $0.00$  & $0.00$  & $2796$  & $22346$     & $10.7\times$ ($10.7 \times$)                                                          & $4.5\times$ ($4.5\times$)                                                              & $1.87\times 10^{-0}$   \\ \hline
		\end{tabular}
	\end{table*}
	
	\begin{table}[ht]
		\small
		\vspace{-8pt}
		\caption{Comparisons of SV estimation methods under SecSV.}
		\label{tab:sv_esti}
		\begin{tabular}{@{}cccl|cl@{}}
			\toprule
			\multirow{2}{*}{\begin{tabular}[c]{@{}c@{}}Dataset\\ (model)\end{tabular}}   & \multirow{2}{*}{Method}       & \multicolumn{2}{c|}{Speedup w.r.t. HESV}                                          & \multicolumn{2}{c}{Error ($\times 10^{-2}$)}                                     \\ \cmidrule(l){3-6} 
			&                               & \multicolumn{2}{c|}{\begin{tabular}[c]{@{}c@{}}SampleSkip \\ off/on\end{tabular}} & \multicolumn{2}{c}{\begin{tabular}[c]{@{}c@{}}SampleSkip \\ off/on\end{tabular}} \\ \midrule
			\multirow{4}{*}{\begin{tabular}[c]{@{}c@{}}AGNEWS\\ (LOGI)\end{tabular}} & \multicolumn{1}{c|}{SecSV}    & \multicolumn{2}{c|}{$4.2\times\quad 7.2\times$}                                   & \multicolumn{2}{c}{$0.10\quad 0.10$}                                             \\
			& \multicolumn{1}{c|}{SecSV+PS} & \multicolumn{2}{c|}{$4.2\times\quad 7.2\times$}                                   & \multicolumn{2}{c}{$2.00\quad 2.01$}                                             \\
			& \multicolumn{1}{c|}{SecSV+GT} & \multicolumn{2}{c|}{$3.5\times\quad 5.5\times$}                                   & \multicolumn{2}{c}{$3.41\quad 3.39$}                                             \\
			& \multicolumn{1}{c|}{SecSV+KS} & \multicolumn{2}{c|}{$5.3\times\quad 8.6\times$}                                   & \multicolumn{2}{c}{$17.63\quad 17.63$}                                           \\ \midrule
			\multirow{4}{*}{\begin{tabular}[c]{@{}c@{}}BANK\\ (LOGI)\end{tabular}}   & \multicolumn{1}{c|}{SecSV}    & \multicolumn{2}{c|}{$21.4\times\quad 36.6\times$}                                 & \multicolumn{2}{c}{$0.09\quad 0.09$}                                             \\
			& \multicolumn{1}{c|}{SecSV+PS} & \multicolumn{2}{c|}{$21.3\times\quad 36.5\times$}                                 & \multicolumn{2}{c}{$1.25\quad 1.24$}                                             \\
			& \multicolumn{1}{c|}{SecSV+GT} & \multicolumn{2}{c|}{$\hspace{0.1cm}8.9\times\quad 10.8\times$}                                 & \multicolumn{2}{c}{$3.40\quad 3.40$}                                             \\
			& \multicolumn{1}{c|}{SecSV+KS} & \multicolumn{2}{c|}{$27.0\times\quad 44.1\times$}                                 & \multicolumn{2}{c}{$7.67\quad 7.66$}                                             \\ \midrule
			\multirow{4}{*}{\begin{tabular}[c]{@{}c@{}}MNIST\\ (CNN)\end{tabular}}       & \multicolumn{1}{c|}{SecSV}    & \multicolumn{2}{c|}{$7.0\times\quad 25.8\times$}                                  & \multicolumn{2}{c}{$0.09\quad 0.64$}                                             \\
			& \multicolumn{1}{c|}{SecSV+PS} & \multicolumn{2}{c|}{$7.0\times\quad 25.8\times$}                                  & \multicolumn{2}{c}{$2.69\quad 2.88$}                                             \\
			& \multicolumn{1}{c|}{SecSV+GT} & \multicolumn{2}{c|}{$6.9\times\quad 25.3\times$}                                  & \multicolumn{2}{c}{$3.58\quad 3.80$}                                             \\
			& \multicolumn{1}{c|}{SecSV+KS} & \multicolumn{2}{c|}{$9.0\times\quad 27.2\times$}                                  & \multicolumn{2}{c}{$15.46\quad 15.65$}                                           \\ \midrule
			\multirow{4}{*}{\begin{tabular}[c]{@{}c@{}}miRNA-mRNA\\ (RNN)\end{tabular}}  & \multicolumn{1}{c|}{SecSV}    & \multicolumn{2}{c|}{$5.3\times\quad 11.8\times$}                                  & \multicolumn{2}{c}{$1.70\quad 1.82$}                                             \\
			& \multicolumn{1}{c|}{SecSV+PS} & \multicolumn{2}{c|}{$5.3\times\quad 11.8\times$}                                  & \multicolumn{2}{c}{$3.03\quad 3.25$}                                             \\
			& \multicolumn{1}{c|}{SecSV+GT} & \multicolumn{2}{c|}{$5.3\times\quad 11.7\times$}                                  & \multicolumn{2}{c}{$3.67\quad 3.50$}                                             \\
			& \multicolumn{1}{c|}{SecSV+KS} & \multicolumn{2}{c|}{$7.0\times\quad 14.0\times$}                                  & \multicolumn{2}{c}{$20.77\quad 20.49$}                                           \\ \bottomrule
		\end{tabular}
		 \vspace{-8pt}
	\end{table}
	\subsection{Discussion}
	
	\subsubsection{Connection with secure federated training}
	The protocol for secure SV calculation is generic and independent of the protocol for secure federated training, as the local models are assumed to be given.
	Thus, the training method is out of the scope of this study.
	Additionally, owing to the use of the FSV, secure SV calculation can be inserted into secure federated training, giving the server flexibility to decide when to perform contribution evaluation.
	For instance, the server may promptly calculate SSVs after each training round to identify clients with negative contributions at an early stage;
	if the training task is urgent, the clients can prioritize finishing all the training rounds over secure SV calculation.
	
	\subsubsection{Parallelization}
	There are two basic strategies for parallelizing secure SV calculation.
	First, because the FSV is the sum of multiple rounds of SSVs, which are independent of each other, the server(s) can recruit a process for each round of SSV calculation to parallelize FSV calculation.
	Second, the server(s) can further parallelize each round of secure model testing by distributing the numerous models or test samples to multiple processes for evaluation.
	However, this strategy cannot be easily adopted for SampleSkip, under which the workload of testing a model may depend on the evaluation result of another model.
	Developing parallelization methods that are compatible with SampleSkip would be interesting.
	
	\subsubsection{Accelerating SV calculation}
	To calculate SSVs in a single round of FL, the server has to evaluate $O(2^n)$ models on $M$ test samples, which requires testing models $O(2^n\cdot M)$ times.
	Therefore, we can reduce the scale of (1) models or (2) samples to be tested.
	The existing SV estimation methods \cite{fatima2008linear, maleki2015addressing, ghorbani2019data, jia2019towards} use the first strategy:
	they sample and test only a subset of models for SV calculation.
	Skipping a portion of models does not introduce a significant estimation error because these methods assume that numerous clients participate in collaborative ML.
	However, in cross-silo FL, the number $n$ of clients is relatively small, causing the failure of the above methods in skipping models.
	SampleSkip instead adopts the second strategy and is effective in those cases where model testing is a significant bottleneck, e.g., when HE is used.
	Moreover, we can use both strategies and easily combine SampleSkip with an existing model-skipping estimation method. 
	Note that although $n$ is small and our task is parallelizable, accelerating secure SV calculation is crucial because HE slows down SV calculation drastically.
	
	\subsubsection{Dynamic scenarios}
	Our methods also work when clients join in or leave FL halfway.
	For each client, we only need to calculate SSVs for the rounds she attended and add them up as her FSV.

	\subsubsection{Other variants of SV}
	Our techniques also support other FL-oriented variants of SV, including the Contribution Index \cite{song2019profit}, Group SV \cite{ma2021transparent}, GTG-Shapley \cite{liu2022gtg}, and WT-Shapley \cite{yang2022wtdp}.
	Intuitively, these metrics are functions of the utilities of local and aggregate models;
	thus, after securely testing model utilities by our methods, we can easily calculate them.
	This characterization also suggests that our methods could be adapted to other model utility-based contribution metrics for FL.

	\section{Experiments}

	\subsection{Setup}
	\subsubsection*{Research questions} 
	We experiment to answer these questions.
	\begin{itemize}[leftmargin=*]
		\item \textit{Evaluation efficiency (RQ1)}: 
		How efficient are SecSV and HESV for secure SV calculation? 
		How much can SampleSkip accelerate SecSV?
		How efficient are Matrix Squaring and Matrix Reducing for secure matrix multiplication?
		\item \textit{Evaluation accuracy (RQ2)}: 
		How accurately do SecSV and HESV calculate FSVs?
		How many test samples are wrongly identified as skippable samples by SampleSkip?
		\item \textit{Parameters' effects (RQ3)}: 
		What are the effects of the size $M$ of test samples, number $n$ of clients, and number $L$ of layers on evaluation efficiency and accuracy?
	\end{itemize}

	\subsubsection*{Baselines}
	For secure SV calculation, we compare SecSV with HESV and Nonsecure SV (NSSV), which calculates exact FSVs in a nonsecure environment.
	In addition, we design a two-server baseline protocol named \textit{SecretSV} for comparison, which protects models and test data purely by ASS.
	See \venueforappendix{\ref{appendix:protocol}} to check the details of NSSV and SecretSV.
	For SV estimation, we compare SampleSkip with three widely adopted methods: \textit{Permutation Samples (PS)} \cite{maleki2015addressing}, \textit{Group Testing (GT)} \cite{jia2019towards}, and \changemarker{\textit{KernelSHAP (KS)} \cite{lundberg2017kernelshap}}.
	
	\begin{table}[t]
		\small
		\vspace{-8pt}
		\caption{Speedup of Matrix Reducing w.r.t. Matrix Squaring in the time per sample spent on HE computations for evaluating $AB$. The shape of matrix $A$ is varied. "Full" means both $A$ and $B$ are encrypted, whilst "Half" means only $A$ is encrypted.}
		\label{tab:matmul_eval}
		\begin{tabular}{@{}c|ccccccc@{}}
			\toprule
			Shape & $4\!\times\!300$ & $2\!\times\!48$ & $64\!\times\!256$ & $10\!\times\!64$ & $32\!\times\!64$ & $32\!\times\!32$ & $2\!\times\!32$ \\ \midrule
			Full  & $1.69\times$     & $6.10\times$    & $1.99\times$      & $2.30\times$     & $2.66\times$     & $2.85\times$     & $2.45\times$    \\
			Half  & $3.24\times$     & $11.39\times$   & $3.92\times$      & $4.49\times$     & $5.23\times$     & $3.71\times$     & $2.87\times$    \\ \bottomrule
		\end{tabular}
		 \vspace{-8pt}
	\end{table}

	\begin{figure*}[ht]
		\centering
		\begin{minipage}[t]{0.25\linewidth}
			\centering
			\includegraphics[scale=0.29]{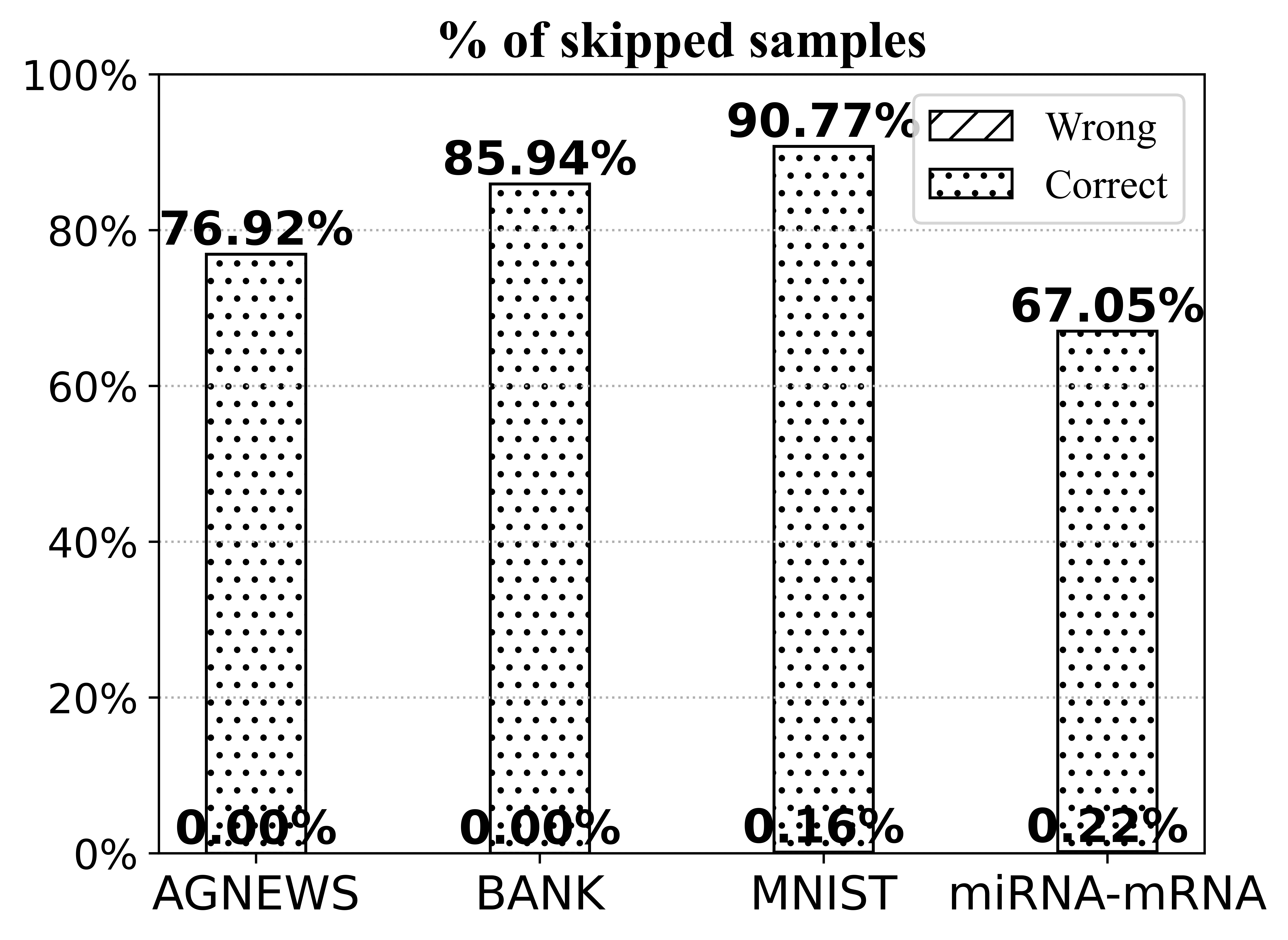}
			\vspace{-17pt}
			\caption{Skipped samples.}
			\label{fig:skipped}
		\end{minipage}%
		\begin{minipage}[t]{0.25\linewidth}
			\centering
			\includegraphics[scale=0.28]{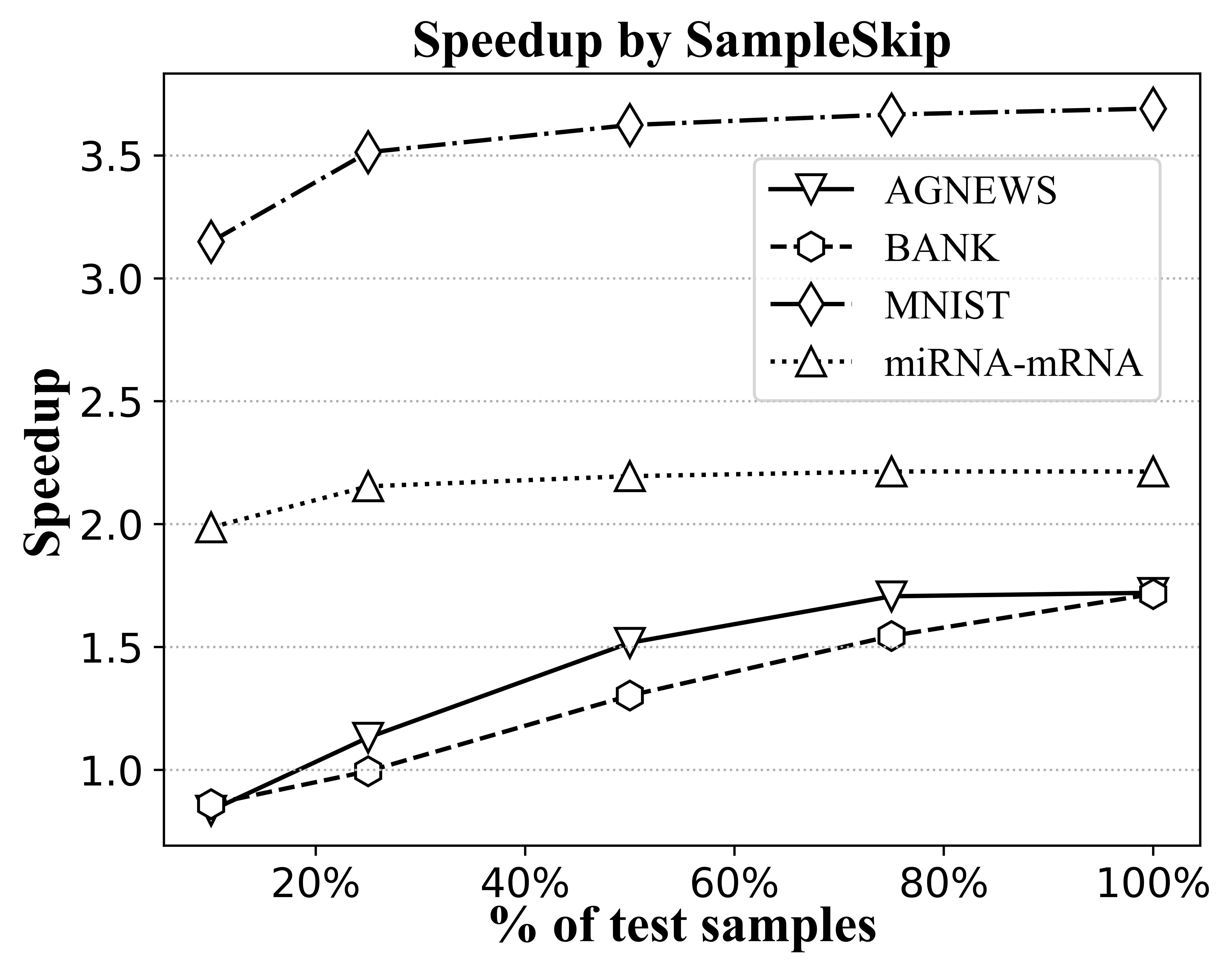}
			\vspace{-6pt}
			\caption{Effect of $M$.}
			\label{fig:vary_M}
		\end{minipage}%
		\begin{minipage}[t]{0.25\linewidth}
			\centering
			\includegraphics[scale=0.28]{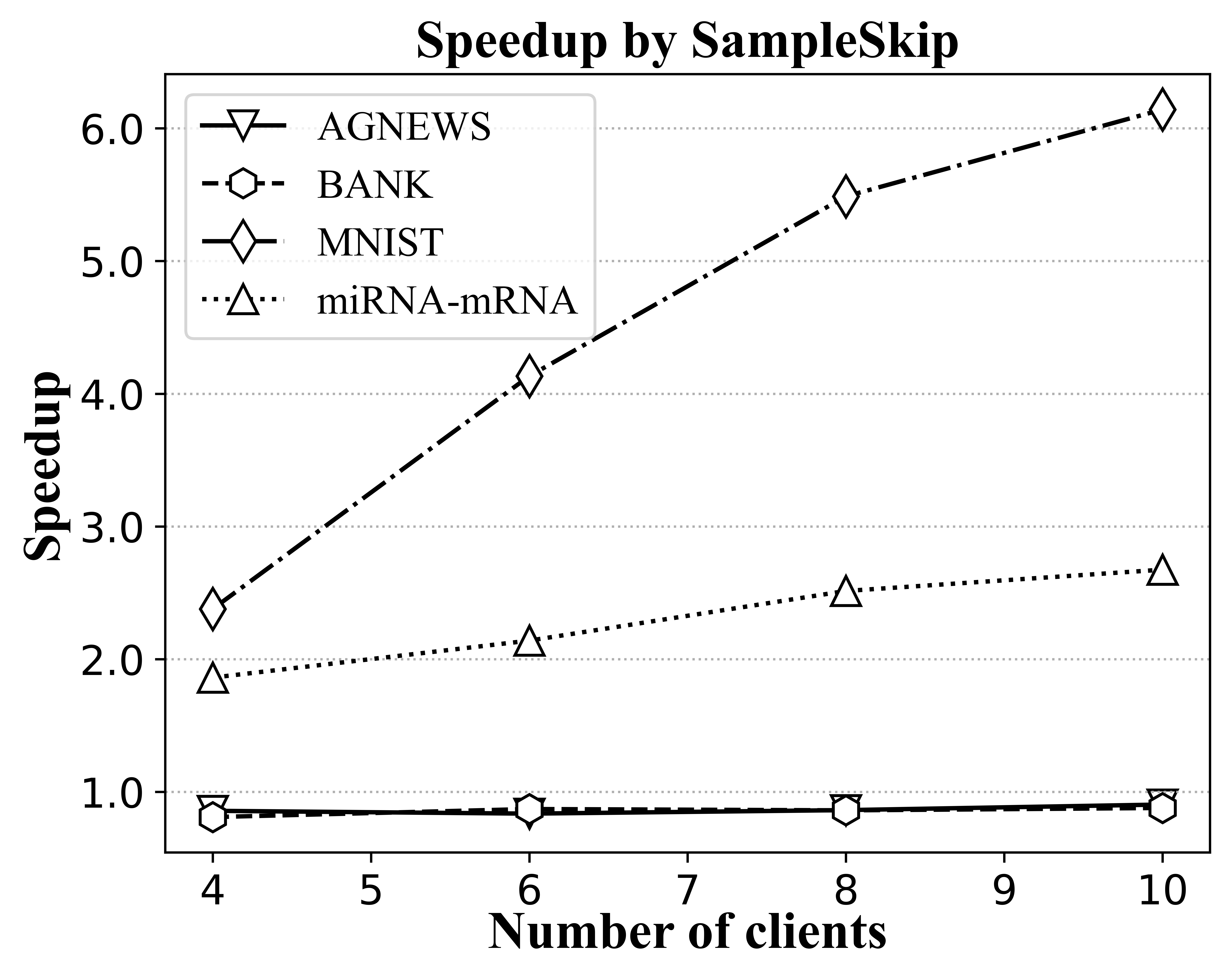}
			\vspace{-6pt}
			\caption{\changemarker{Effect of $n$.}}
			\label{fig:vary_n}
		\end{minipage}%
		\begin{minipage}[t]{0.25\linewidth}
			\centering
			\includegraphics[scale=0.28]{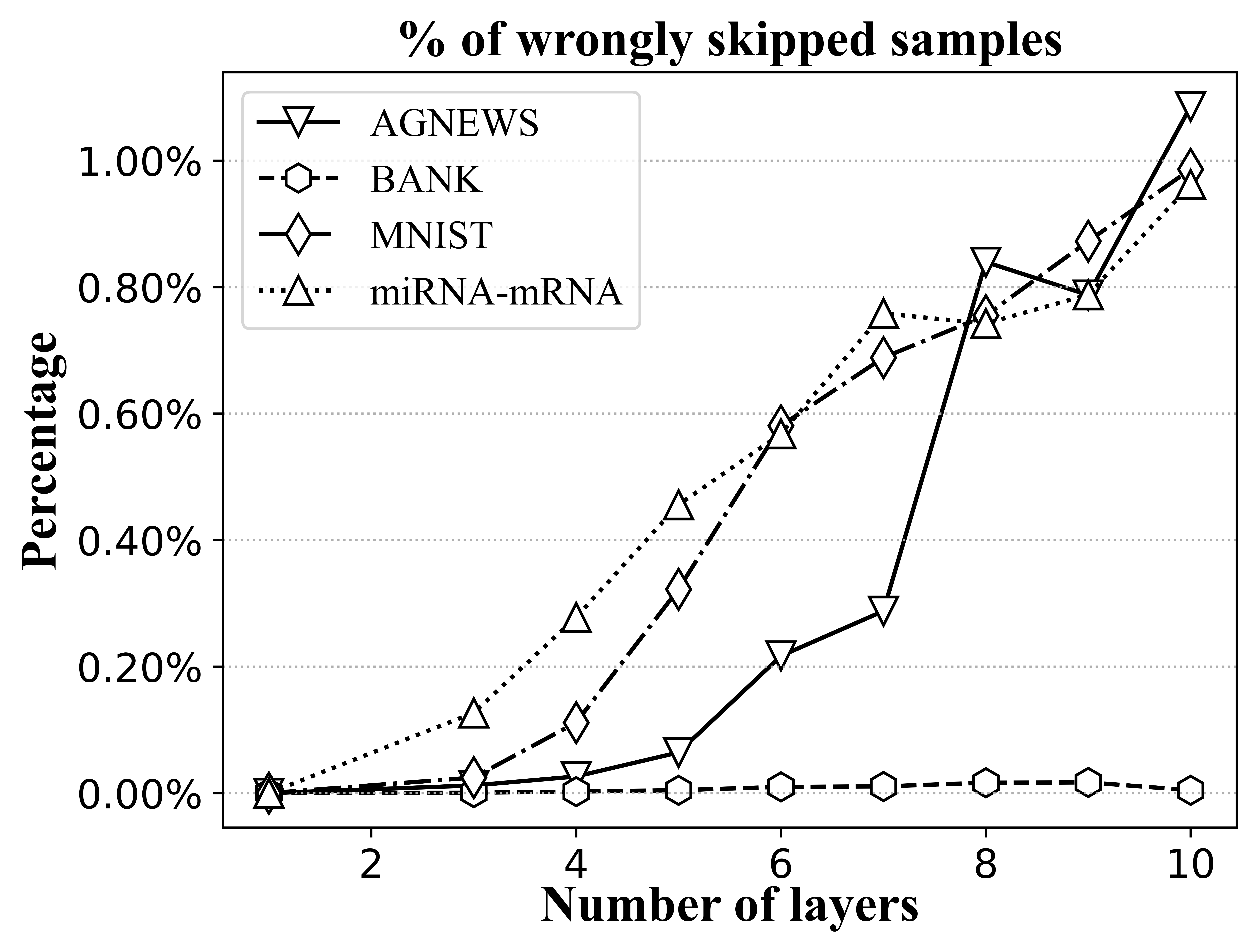}
			\vspace{-6pt}
			\caption{\changemarker{Effect of $L$.}}
			\label{fig:vary_L}
		\end{minipage}%
		\vspace{-8pt}
	\end{figure*}
	
	\subsubsection*{Datasets and classifiers} 
	We use the \textit{AGNEWS} \cite{zhang2015character}, \textit{BANK} \cite{moro2014data}, \textit{MNIST} \cite{lecun1998gradient}, and \textit{miRNA-mRNA} \cite{menor2014mirmark} datasets, which cover the news classification, bank marketing, image recognition, and miRNA target prediction tasks, for experiments.
	By default, we consider logistic classifiers (LOGI) for AGNEWS and BANK, a convolutional NN (CNN) \cite{jiang2018e2dm} for MNIST, and a recurrent NN (RNN) \cite{lee2016deeptarget} for miRNA-mRNA;
	when evaluating the effect of the number $L$ of layers, we rather focus on the deep NN (DNN) to easily vary $L$, which consists of an input layer, an output layer, and $L - 2$ hidden layers.
	See \venueforappendix{\ref{appendix:data_model}} for details of the datasets and classifiers.
	
	\subsubsection*{Federated training} 
	We  allocate each label of training and test samples to clients according to the Dirichlet distribution of order $n$ with scaling parameters of $0.5$ to construct non-IID local data, following \cite{yurochkin2019bayesian}.
	Subsequently, $n=5$ clients perform $T=10$ rounds of the FedAvg algorithm \cite{mcmahan2016federated} to train the above classifiers.

	\subsubsection*{Evaluation metrics}
	The computation efficiency of a secure SV calculation protocol is measured by its speedup w.r.t. the baseline HESV.
	Subsequently, following \cite{jia2019towards}, we use the Euclidean distance to measure the error of the estimated FSVs.
	Concretely, let $\boldsymbol{\hat{\phi}}$ and $\boldsymbol{\phi}$ denote the vectors of the estimated FSVs and the exact FSVs for all the clients. 
	The (expected) error of $\boldsymbol{\hat{\phi}}$ is defined as $\mathbb{E}_{\boldsymbol{\phi}}[||\boldsymbol{\hat{\phi}} - \boldsymbol{\phi}||_2]$.
	Note that both HE and ASS introduce noise into messages, and the exact FSVs are calculated by the nonsecure protocol NSSV.
	
	\subsubsection*{Environment}
	We implement the server and clients on a workstation with a $3.0$ GHz 16-core processor and $128$ GB RAM and logically simulate $1$ Gbps communication channels between the parties.  
	When evaluating efficiency under parallelization, we allocate a dedicated CPU core for each round of SSV calculation.
	We use TenSEAL \cite{tenseal2021}, a Python wrapper for Microsoft's SEAL library, to implement HE operations and select the parameters of HE and ASS for $128$-bit security based on the \textit{Homomorphic Encryption Security Standard} \cite{HomomorphicEncryptionSecurityStandard}, which results in $N=2048$ ciphertext slots.
	We run each experiment $10$ times and present the average results.
	
	\subsection{Experimental Results}

	\subsubsection{Evaluation efficiency (RQ1)}
	Table \ref{tab:eval} shows the running time of different protocols spent on secure SV calculation.
	We can see that HESV requires a considerably long running time even under parallelization and is dramatically less efficient than NSSV because the use of HE is extremely expensive in computation. This verifies the importance of SV estimation in our setting.
	The slowdown effect could be more significant if we run NSSV on GPUs.
	SecSV spends considerably less time than HESV on HE computations owing to the avoidance of c2c multiplications; 
	the time it takes to generate secret shares is insignificant compared with the total running time.
	Although SecSV requires communication with an auxiliary server, it may consume less communication time than HESV because the size of a message's secret shares is significantly smaller than that of its ciphertext.
	SecretSV performs arithmetic operations efficiently as models and test data are not encrypted.
	However, shares generation and communication are time-consuming because a triplet of random masks should be secretly shared between the two servers for every multiplication under ASS. 
	
	Table \ref{tab:sv_esti} compares SampleSkip with the existing SV estimation methods in terms of acceleration.
	Whereas SampleSkip and KS significantly speed up SecSV, the PS and GT methods hardly accelerate it because they fail to skip any models.
	For the logistic classifiers, GT significantly slows down SecSV because it needs to calculate FSVs by solving an optimization problem, which is expensive for light classifiers. 
	Furthermore, SampleSkip can be combined with all these baselines to significantly improve efficiency.
	This marvelous performance is caused by the high percentage of test samples skipped by SampleSkip, as shown in Figure \ref{fig:skipped}.
	
	Finally, we evaluate the efficiency of Matrix Squaring and Matrix Reducing.
	As seen in Table \ref{tab:matmul_eval}, we evaluate secure matrix multiplications between a $d_{out} \times d_{in}$ matrix $A$ and a $d_{in} \times m$ matrix $B$;
	we set batch size $m=\lfloor N/d_{out} \rfloor$ for Matrix Reducing and $m=\min\{d_{in},\lfloor \sqrt{N} \rfloor\}$ for Matrix Squaring.
	Subsequently, we enumerate all the matrix multiplications involved in the four classifiers considered in our experiments and present the speedup of Matrix Reducing w.r.t. Matrix Squaring in each case.
	When numerous homomorphic rotations are required for Matrix Squaring, e.g., when $A$ is of shapes  $2 \times 48$, the computation time per sample under Matrix Reducing is considerably less than that under Matrix Squaring.
	Additionally, compared with the case where both $A$ and $B$ are encrypted, when only $A$ is encrypted, the speedup of Matrix Reducing is higher because the computation time required for homomorphic multiplications becomes less whereas that required for homomorphic rotations remains the same.

	\subsubsection{Evaluation accuracy (RQ2)}
	As shown in Table \ref{tab:eval}, both SecSV and HESV can accurately calculate FSVs for all datasets.
	The error of the FSVs calculated by SecretSV is passable for the logistic and CNN classifiers but overly large for RNN.
	This may be because RNN has numerous linear layers in sequence, through which ASS introduces accumulated truncation errors into prediction results.
	
	Table \ref{tab:sv_esti} suggests that SampleSkip introduces only an insignificant error for SecSV even when combined with other estimation methods.
	The PS and GT methods vastly increase the error because they approximate the FSVs even if they skip no models.
	The KS method skips some models and accelerates SecSV to some extent but also results in an unacceptable error for all the datasets because it works poorly when the number $n$ of clients is small.
	
	We also present the percentages of the samples skipped by SampleSkip in Figure \ref{fig:skipped}.
	It is evident that no samples are wrongly skipped for the linear logistic classifiers, which experimentally demonstrates the correctness of Theorem \ref{thm:sampleskip}.
	For the nonlinear CNN and RNN classifiers, only a tiny portion of samples are wrongly skipped, which results in a minor error.
	
	\subsubsection{Parameters' effects (RQ3)}
	First, we vary the size $M$ of test samples to evaluate its effect on the efficiency of SecSV.
	As depicted in Figure \ref{fig:vary_M}, when more test samples are used for evaluation, SampleSkip becomes more effective in accelerating SecSV.
	This effect may be due to two reasons.
	First, SecSV usually requires a large batch size to perform at its best owing to the use of Matrix Reducing.
	When the size is small, and the majority of the samples are skipped, the remaining discriminative samples might be overly few to fulfill the batch, which causes a waste of ciphertext slots.
	Second, SampleSkip needs to preprocess discriminative samples for each aggregated model, which is relatively time-consuming, particularly for light classifiers.
	Thus, for the BANK and AGNEWS datasets with logistic models, when only a few samples are evaluated, SampleSkip may slow down SecSV.
	Note that we can easily avoid this problem by adaptively turning off SampleSkip.
	
	We then vary the number $n$ of clients to test its effect on the efficiency of SecSV.
	In this case, we use $10\%$ of the entire test samples.
	In Figure \ref{fig:vary_n}, we can see that when $n$ increases, SampleSkip accelerates SecSV more for MNIST and miRNA-mRNA but probably less for AGNEWS and BANK.
	The reason for this effect is as follows.
	An increase of $n$ implies a higher percentage of aggregated models w.r.t. all models because the number of aggregated models increases exponentially w.r.t. $n$.
	The increase in the percentage of aggregated models can amplify the speedup and slowdown effects of SampleSkip.
	Therefore, because SampleSkip speeds up SecSV for MNIST and miRNA-mRNA in Figure \ref{fig:vary_M}, a larger $n$ results in a higher speedup.
	However, because SampleSkip slows down SecSV when only $10\%$ of the test samples are used owing to the need to preprocess discriminative samples, a larger $n$ may cause a higher slowdown.
	Note that the lines for AGNEWS and BANK in Figure \ref{fig:vary_n} may rise because more models need to be securely tested; this makes the speedup approach $1.0$. 
	
	Finally, we vary the number $L$ of layers in the DNN classifier.
	Figure \ref{fig:vary_L} shows that when $L=1$, which means DNN is reduced to a logistic classifier because no hidden layer exists, no samples are wrongly skipped, which verifies Theorem \ref{thm:sampleskip}.
	When $L$ increases, DNN becomes "more nonlinear"; thus, more samples may be wrongly skipped. 
	However, the percentages of the wrongly skipped samples remain small even when $L=10$.
	
	\section{Related Work}
	\subsubsection*{Secure FL}
	There are two typical FL settings: cross-device and cross-silo \cite{kairouz2021advances}.
	In the cross-device setting, the clients are numerous unreliable mobile or IoT devices with limited computation and communication resources, making using lightweight cryptographic techniques to protect clients' privacy reasonable.
	Therefore, existing secure cross-device FL systems \cite{bonawitz2017practical, guo2020v, bell2020secure, choi2020communication, kadhe2020fastsecagg, so2020byzantine, so2021turbo, so2022lightsecagg} usually protect privacy using secure aggregation:
	they mask the clients' updates using secret sharing, and the mask for each update will be canceled by aggregating the masked updates. 
	Nevertheless, secure aggregation can only protect the local updates; the final model is still leaked to the FL server or a third party.
	Some studies \cite{geyer2017differentially, mcmahan2017learning, triastcyn2019federated, hu2020personalized, zhao2020local, kairouz2021distributed, girgis2021shuffled, liu2021flame, noble2022differentially} protect privacy against a third party using differential privacy (DP) \cite{dwork2006calibrating} or even against the server using local DP \cite{evfimievski2003limiting}.
	However, the uses of DP and local DP significantly compromise the final model's utility.
	Conversely, the clients in cross-silo FL are a small number of organizations with abundant communication and computing resources.
	Considering that the purpose of the clients is to train an accurate model for their own use, they might not allow releasing the final model to external parties, including the server, and/or tolerate a nontrivial loss of model utility \cite{zhang2020batchcrypt}.
	Consequently,  HE is a natural choice for cross-silo FL, which is advocated by many works \cite{phong2018privacy, truex2019hybrid, zhang2020batchcrypt, sav2020poseidon, zhang2021dubhe, jiang2021flashe, ma2022privacy}.
	Although HE largely increases the computation and communication overhead, it is acceptable to those powerful clients.
	However, the existing HE-based FL systems cannot support secure SV calculation because they only protect the model aggregation process, which inspires this paper.
	Additionally, Ma et al. \cite{ma2021transparent} studied a similar problem to ours; 
	however, they aimed to make the contribution evaluation process transparent and auditable using blockchain, whereas we focus on the security of this process.

	\subsubsection*{SV for collaborative ML}
	The SV has been widely adopted in collaborative ML for contribution evaluation and numerous downstream tasks.
	Concretely, because the value of data is task-specific, the contribution metric SV can be considered a data valuation metric \cite{ghorbani2019data, jia2019towards, wang2020principled, wei2020efficient}.
	Some studies \cite{song2019profit,ohrimenko2019collaborative,liu2020fedcoin, han2020replication} proposed SV-based payment schemes that allocate monetary payments to clients based on their respective SVs.
	The evaluated SVs can also be used to decide the qualities of models \cite{sim2020collaborative} or synthetic data \cite{tay2022incentivizing} that are rewarded to clients and to identify irrelevant clients who may have negative contributions to the final model \cite{nagalapatti2021game}.

	Given that SV calculation requires exponential times of model retraining and evaluation to accelerate SV calculation, various estimation methods \cite{fatima2008linear, maleki2015addressing, ghorbani2019data, jia2019towards, song2019profit, wei2020efficient, wang2020principled, liu2022gtg} were proposed, and they can be categorized into two: \textit{retraining-skipping} and \textit{model-skipping}.
	The former class of methods reduces the complexity of or even eliminates the redundant retraining process.
	For example, the FSV \cite{wang2020principled}, MR \cite{song2019profit}, Truncated MR \cite{wei2020efficient}, and GTG-Shapley \cite{liu2022gtg} methods use the local and aggregated models for SV calculation, which avoids retraining FL models.
	The model-skipping methods, e.g., the PS \cite{maleki2015addressing} and GT \cite{jia2019towards} methods used in our experiments, approximate the SV by sampling a sufficient number of models for evaluation.
	Our SampleSkip method skips test samples to accelerate the model evaluation process, which does not belong to the above classes and opens up a new direction: \textit{sample-skipping estimation}.

	\subsubsection*{Secure inference}
	This problem regards securely making predictions given a client's encrypted input features for prediction-as-a-service (PaaS).
	The majority of existing studies (e.g., \cite{gilad2016cryptonets, juvekar2018gazelle, mishra2020delphi, reagen2021cheetah, meftah2021doren}) assumed that the prediction model is not outsourced because the server of PaaS is usually the model owner himself and thus do not encrypt the model, which is different from our privacy model.
	Several studies \cite{mohassel2017secureml, jiang2018e2dm, hesamifard2018privacy} examined the scenario of \textit{secure outsourced inference} where the model is outsourced to and should be protected against an untrusted server, which is similar to our scenario.
	Nevertheless, in our problem, we need to ensure security against more untrusted parties, and the server(s) should securely evaluate numerous models on massive test samples, which calls for estimation methods to reduce the scale of the task.
	
	\section{Conclusion}
	In this paper, we introduced the problem of secure SV calculation for cross-silo FL and proposed two protocols.
	The one-server protocol HESV had a stronger security guarantee because it only required a single server to coordinate secure SV calculation, whereas the two-server protocol SecSV was considerably more efficient. 
	Solving this problem facilitates many downstream tasks and is an essential step toward building a trustworthy ecosystem for cross-silo FL.
	Future directions include studying secure SV calculation in the vertical FL setting, where clients possess different attributes of the same data samples, and developing secure protocols for other types of prediction models and more effective acceleration techniques.
	
	\begin{acks}
		We are grateful for the generous support provided by JSPS to the first author, a JSPS Research Fellow, as well as the support from JST CREST (No. JPMJCR21M2), JST SICORP (No. JPMJSC2107), and JSPS KAKENHI (No. 17H06099, 21J23090, 21K19767, 22H03595).
	\end{acks}
	
	
	\bibliographystyle{ACM-Reference-Format}
	\balance
	\bibliography{ref}

\showappendix{
	\begin{algorithm}[h]
		\small
		\caption{Nonsecure SV (NSSV)}
		\begin{algorithmic}[1]
			\STATE Each client $i$: Submit test data $D_i$ to Server
			\FOR{$t$ in $\{1,...,T\}$}
			\FOR {$S \subseteq I^t, |S| > 0$}
			\STATE Server: Calculate $\theta^t_S$ and evaluate its utility $v(\theta^t_S)$
			\ENDFOR
			\STATE Server $\mathcal{P}$: Calculate SSVs $\phi^{t}_1,...,\phi^{t}_n$
			\ENDFOR
			\STATE Server $\mathcal{P}$: Calculate FSVs $\phi_1,...,\phi_n$
			\RETURN $\phi_1,...,\phi_n$
		\end{algorithmic}
		\label{alg:nssv}
	\end{algorithm}

	\begin{algorithm}[h]
		\small
		\caption{SecretSV}
		\begin{algorithmic}[1]
			\STATE Server $\mathcal{P}$: Randomly select a leader client
			\STATE Leader: Generate a public key $pk$ and a private key $sk$ of HE and broadcast them among the other clients
			\STATE Each client $i$: Generate secret shares $\theta'^t_i, \theta''^t_i$ of $\theta^t_i$, send $\theta'^{t}_i$ to server $\mathcal{P}$, and send $\theta''^{t}_i$ to server $\mathcal{A}$.
			\STATE Each client $i$: Generate secret shares $D_i'=(X_i', Y_i'), D_i''=(X_i'', Y_i'')$ of $D_i=(X_i,Y_i)$, send $D_i'$ to server $\mathcal{P}$, and send $D_i''$ to server $\mathcal{A}$
			\FOR{$t$ in $\{1,...,T\}$}
			\FOR {$S \subseteq I^t, |S| > 0$}
			\STATE Servers $\mathcal{P}, \mathcal{A}$: Calculate $\theta'^t_S$ and $\theta''^t_S$ 
			\STATE Server $\mathcal{P}$: Run Algorithm \ref{alg:infer_ass-sv}  $\Pi_{ASS}((\theta'^t_S, \theta''^t_S), (\mathcal{D}', \mathcal{D}''))$ to obtain $cnt^t_{S}$
			\STATE Server $\mathcal{P}$: Calculate $v(\theta^t_S) = cnt^t_{S}/M$
			\ENDFOR
			\ENDFOR
			\STATE Server $\mathcal{P}$: Calculate SSVs $\phi^{t}_1,...,\phi^{t}_n, \forall t \in [T]$
			\STATE Server $\mathcal{P}$: Calculate FSVs $\phi_1,...,\phi_n$
			\RETURN $\phi_1,...,\phi_n$
		\end{algorithmic}
		\label{alg:secretsv}
	\end{algorithm}
	
	\begin{algorithm}[h]
		\small
		\caption{$\Pi_{ASS}$: Secure Testing for SecretSV}
		\begin{algorithmic}[1]
			\REQUIRE secret shares $\theta'^t, \theta''^t$ of model, and secret shares $\mathcal{D}', \mathcal{D}'$ of collective test set
			\ENSURE count $cnt$ of correct predictions 
			\STATE $cnt \gets 0$
			\FOR {each $D'=(X',Y'), D''=(X'',Y'') \in (\mathcal{D}', \mathcal{D}'')$}
			\FOR {each model layer $l\in \{1,...,L\}$}
			\STATE Server $\mathcal{A}$: Calculate $\hat{Y}''^{(l)} = lin^{(l)}(\theta''^t, X''^{(l)})$ and send $\hat{Y}''^{(l)}$ to client $i_{l+1} \in I \backslash\{i_{l}\}$, where $X''^{(1)} = X''$
			\STATE Server $\mathcal{P}$: Calculate $ \hat{Y}'^{(l)}  = lin^{(l)}(\theta'^t, X'^{(l)})$ and send $\hat{Y}'^{(l)}$ to client $i_{l+1} \in I \backslash\{i_{l}\}$, where $X'^{(1)} = X'$
			\IF{$l < L$}
			\STATE Client $i_{l+1}$: Compute $X^{(l+1)}= ac^{(l)}((\hat{Y}'^{(l)} + \hat{Y}''^{(l)}) \bmod p)$, generate shares $X'^{(l+1)}$ and $X''^{(l+1)}$, and send them to servers $\mathcal{P}$ and $\mathcal{C}$, respectively
			\ELSE 
			\STATE Client $i_{L+1}$: Calculate $\hat{Y}=\mathbf{Argmax}((\hat{Y}'^{(L)} + \hat{Y}''^{(L)})\bmod p)$, generate shares $\hat{Y}', \hat{Y}''$, and send them to servers $\mathcal{P}$ and $\mathcal{C}$, respectively
			\ENDIF
			\ENDFOR
			\STATE Server $\mathcal{A}$: Compute and send $\tilde{Y}''= \hat{Y}'' - Y''$ to server $\mathcal{P}$. 
			\STATE Server $\mathcal{P}$: Calculate $\tilde{Y} = abs((\hat{Y}' - Y' + \tilde{Y}'') \bmod p)$, where $abs$ denotes taking the entrywise absolute value.
			\STATE Server $\mathcal{P}$: $cnt \gets cnt + \sum_{k=1}^{m} \boldsymbol{1}(|\tilde{Y}[k]| < 0.5)$
			\ENDFOR
			\RETURN $cnt$
		\end{algorithmic}
		\label{alg:infer_ass-sv}
	\end{algorithm}
	
	\section{Missing Protocols}
	\label{appendix:protocol}
	The Nonsecure SV (NSSV) and SecretSV protocols are presented in Algorithms \ref{alg:nssv} and \ref{alg:secretsv}.
	NSSV just calculates the ground-truth FSVs in a nonsecure environment.
	SecretSV protects models and test data based purely on ASS, and its secure testing protocol (Algorithm \ref{alg:infer_ass-sv}) is adapted from a SOTA three-party ASS-based secure ML protocol \cite{wagh2019securenn}.
	The method for matrix multiplications between secret shares can be found in \cite{wagh2019securenn}.
	
	
	\section{Missing Proofs}
	\label{appendix:proof}
	
	\begin{proof}[Proof of Lemma \ref{lemma:correctness}]
		For all $j \in [d_{out}]$ and $k \in [m]$, we have
		\begin{align*}
			\delta(A, B)[j, k] &  = \sum_{o=1}^{d_{in}} \xi^{(o-1)}(\sigma(A))[j,k] \cdot \psi^{(o-1)}(\tau(B))[j,k] \\
			& = \sum_{o=1}^{d_{in}} \sigma(A)[j,k+o-1] \cdot \tau(B)[j+o-1,k] \\
			& = \sum_{o=1}^{d_{in}} A[j,j+k+o-1] \cdot B[j+k+o-1,k] \\
			& = \sum_{o=1}^{d_{in}} A[j,o] \cdot B[o,k] = (AB)[j, k]
		\end{align*}
	\end{proof}
	
	\begin{lemma}
		\label{lemma:mat_square}
		To evaluate $AB$ under HE with $d_{out} \leq d_{in}$ and $m \leq \min\{d_{in},\lfloor \sqrt{N} \rfloor\}$, Matrix Squaring needs $O(\frac{d_{in} \cdot d_{out}}{\sqrt{N}})$ $HMult$s and $O(\frac{d_{in}}{d_{out} \bmod \sqrt{N}})$ $HRot$s.
	\end{lemma}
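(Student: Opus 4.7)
}
The plan is to reduce the analysis of Matrix Squaring to a counting argument over calls to the SOTA method of Section~\ref{sec:mat_square_original}, tallying up homomorphic multiplications (HMults) and rotations (HRots) per call and then summing over the submatrix decomposition. First I would record the per-call cost of the SOTA algorithm on an input of shape $d'_{out}\times d'_{in}$ times $d'_{in}\times m$ with $d'_{in}\le\lfloor\sqrt{N}\rfloor$: step 4 contributes exactly $d'_{out}$ HMults (one per term in the $\oplus$-sum of Hadamard products), while step 5 contributes $\lceil d'_{in}/d'_{out}\rceil-1$ HRots for the rotate-and-aggregate phase; the linear transformations of step 2 are performed on the client side and therefore do not count toward the server's HMult/HRot budget.

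Next I would split the analysis into two cases based on $d_{out}$. In the case $d_{out}\le\lfloor\sqrt{N}\rfloor$, Matrix Squaring vertically splits $A$ into $K=d_{in}/\lfloor\sqrt{N}\rfloor$ blocks $A_{(\cdot,k)}$ of shape $d_{out}\times\lfloor\sqrt{N}\rfloor$ and horizontally splits $B$ into matching blocks $B_{(k,\cdot)}$, then invokes SOTA once per $k$ with $d'_{out}=d_{out}$ and $d'_{in}=\lfloor\sqrt{N}\rfloor$. Summing the per-call costs gives $K\cdot d_{out}=O(d_{in}d_{out}/\sqrt{N})$ HMults and $K\cdot(\lfloor\sqrt{N}\rfloor/d_{out}-1)=O(d_{in}/d_{out})$ HRots, and since $d_{out}\bmod\sqrt{N}=d_{out}$ in this regime the HRot bound coincides with $O(d_{in}/(d_{out}\bmod\sqrt{N}))$.

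In the case $d_{out}>\lfloor\sqrt{N}\rfloor$, I would further split each $A_{(\cdot,k)}$ horizontally into $J=\lceil d_{out}/\lfloor\sqrt{N}\rfloor\rceil$ blocks $A_{(j,k)}$, of which $J-1$ have $d'_{out}=\lfloor\sqrt{N}\rfloor$ (so $\lceil d'_{in}/d'_{out}\rceil=1$ and no HRot is needed) and the last has $d'_{out}=d_{out}\bmod\lfloor\sqrt{N}\rfloor$. Per $k$, step 4 therefore produces $(J-1)\lfloor\sqrt{N}\rfloor+(d_{out}\bmod\lfloor\sqrt{N}\rfloor)=d_{out}$ HMults, giving $K\cdot d_{out}=O(d_{in}d_{out}/\sqrt{N})$ overall, while step 5 only contributes through the last block, giving $K\cdot(\lfloor\sqrt{N}\rfloor/(d_{out}\bmod\sqrt{N})-1)=O(d_{in}/(d_{out}\bmod\sqrt{N}))$ HRots. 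Combining the two cases yields the claimed bounds.

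The routine bookkeeping is easy; the main obstacle is honestly accounting for the boundary cases introduced by the padding assumption (``$\lfloor\sqrt{N}\rfloor$ exactly divides $d_{in}$'' and ``$d_{out}$ exactly divides $d_{in}$'' in SOTA) so that the partial block in the $d_{out}>\sqrt{N}$ regime is the unique source of rotations, and so that the $d_{out}\bmod\sqrt{N}$ term does not degenerate. I would handle this by noting that zero-padding $A$ to dimensions where the required divisibilities hold changes each count by at most a constant factor absorbed in the $O(\cdot)$, and then verify that with these conventions the rotation cost is dominated by a single ``leftover'' submatrix per value of $k$, which is exactly what makes the $d_{in}/(d_{out}\bmod\sqrt{N})$ factor appear in the final bound.
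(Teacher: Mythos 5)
Your overall strategy is the same as the paper's: reduce Matrix Squaring to a sum of SOTA calls over the block decomposition, charge each call $d'_{out}$ HMults for the Hadamard-product sum and $O(d'_{in}/d'_{out})$ HRots for the rotate-and-aggregate step, and observe that rotations disappear for the square $\lfloor\sqrt{N}\rfloor\times\lfloor\sqrt{N}\rfloor$ blocks so that only the leftover $(d_{out}\bmod\lfloor\sqrt{N}\rfloor)$-row blocks contribute the $O(d_{in}/(d_{out}\bmod\sqrt{N}))$ rotation term. The paper runs the identical tally, merely splitting into three regimes ($d_{in}\le\lfloor\sqrt{N}\rfloor$; $d_{in}>\lfloor\sqrt{N}\rfloor\ge d_{out}$; $d_{out}>\lfloor\sqrt{N}\rfloor$) instead of your two, and using the repeated-doubling rotation count, which is still $O(d'_{in}/d'_{out})$ per call.

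The one step I would not accept as written is your claim that zero-padding to restore the divisibility condition ``changes each count by at most a constant factor.'' The SOTA method needs $d'_{out}$ to exactly divide the inner dimension of the block ($d_{in}$ or $\lfloor\sqrt{N}\rfloor$), and the smallest admissible $d'_{out}\ge d_{out}$ can be as large as that inner dimension when it has few divisors (if $d_{in}$ is prime, $d_{out}=2$ is forced up to $d'_{out}=d_{in}$), a blow-up of order $d_{in}/d_{out}$ rather than a constant. The paper's own proof concedes exactly this: in the regimes with $d_{out}\le\lfloor\sqrt{N}\rfloor$ it charges the worst case ($d'_{out}=d_{in}$ or $\lfloor\sqrt{N}\rfloor$) and reports $O(d_{in})$ HMults, which is a weaker bound than the lemma's $O(d_{in}d_{out}/\sqrt{N})$ whenever $d_{out}=o(\sqrt{N})$. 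So your accounting is the one that actually matches the stated bound, but it rests on an unjustified assumption about the padding; to close it you must either restrict to the case where the (at most doubled) $d_{out}$ divides $\lfloor\sqrt{N}\rfloor$, or fall back to the paper's worst-case count and accept $O(d_{in})$ HMults in the small-$d_{out}$ regime. Your rotation count is unaffected, since enlarging $d'_{out}$ only decreases $O(d'_{in}/d'_{out})$.
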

	
	\begin{proof}[Proof of Lemma \ref{lemma:commutative}]
		Let $A'=(A\big\bracevert A)$, $B'=(B_1\big\bracevert B_2)$, and $B''=(B';B')$.
		By running the first two steps of $\delta(A', B')$ and $\delta(A', B'')$, we derive two sets of matrices $\{\tilde{A}'^{(o)}\}_{o=1}^{d_{in}}$ and $\{\tilde{B}'^{(o)}\}_{o=1}^{d_{in}}$ for $\delta(A', B')$ and $\{\tilde{A}'^{(o)}\}_{o=1}^{2d_{in}}$ and $\{\tilde{B}''^{(o)}\}_{o=1}^{2d_{in}}$ for $\delta(A', B'')$, respectively, where
		\begin{equation*}
			\forall o \in [2d_{in}], \tilde{A}'^{(o)} = \tilde{A}'^{((o-1) \bmod d_{in}+1)}, \tilde{B}''^{(o)} = \tilde{B}'^{((o-1) \bmod d_{in}+1)} 
		\end{equation*}
		Obviously, we have $2\delta(A', B') =\delta(A', B'')$ and can conclude that
		\begin{align*}
			&(\delta(A, B_1) \big\bracevert \delta(A, B_2)) = (AB_1 \big\bracevert AB_2) = AB' \\
			= &\frac{1}{2}A'B'' = \frac{1}{2}\delta(A', B'') = \delta(A', B') = \delta((A \big\bracevert A), (B_1 \big\bracevert B_2))
		\end{align*}
		
	\end{proof}
	
	\begin{proof}[Proof of Lemma \ref{lemma:mat_square}]
		When $d_{out} \leq d_{in} \leq \lfloor \sqrt{N} \rfloor$, Matrix Squaring, i.e., the SOTA method, pads $A$ with zeros to derive a $d'_{out} \times d_{in}$ matrix $A'$ such that $d'_{out}$ exactly divides $d_{in}$.
		In the best case where $d_{out}$ exactly divides $d_{in}$, we have $d'_{out}=d_{out}$, while in the worst case where no number $d \in (d_{out}, d_{in})$ that exactly divides $d_{in}$ exists, we have $d'_{out} = d_{in}$.
		Then, it needs $d'_{out}$ homomorphic multiplications and $\lfloor \log(d_{in} / d'_{out}) \rfloor + d_{in} / d'_{out} - 2^{\lfloor \log(d_{in} / d'_{out}) \rfloor}$ homomorphic rotations to evaluate $AB$, which corresponds to $O(d_{in})$ homomorphic multiplications and $O(d_{in}/d_{out})$ homomorphic rotations. 
		
		When $d_{in} > \lfloor \sqrt{N} \rfloor \geq d_{out}$, Matrix Squaring, i.e., the SOTA method with our extension, splits $A$ and $B$ to derive $\lceil d_{in} / \lfloor \sqrt{N} \rfloor \rceil$ pairs of $d_{out} \times \lfloor \sqrt{N} \rfloor$ matrices and $\lfloor \sqrt{N} \rfloor \times m$ matrices and applies the SOTA method to each pair, which in total results in $O(\sqrt{N}) \cdot \lceil d_{in} / \lfloor \sqrt{N} \rfloor \rceil = O(d_{in})$ homomorphic multiplications and $O(\sqrt{N} / d_{out}) \cdot \lceil d_{in} / \lfloor \sqrt{N} \rfloor \rceil = O(d_{in} / d_{out})$ homomorphic rotations.
		
		When $d_{in}  \geq d_{out} > \lfloor \sqrt{N} \rfloor$, Matrix Squaring splits $A$ and $B$ to derive $\lceil d_{in} / \lfloor \sqrt{N} \rfloor \rceil \cdot \lfloor d_{out} / \lfloor \sqrt{N} \rfloor \rfloor$ pairs of $\lfloor \sqrt{N} \rfloor \times \lfloor \sqrt{N} \rfloor$ matrices and $\lfloor \sqrt{N} \rfloor \times m$ matrices as well as $\lceil d_{in} / \lfloor \sqrt{N} \rfloor \rceil$ pairs of $(d_{out} \bmod \lfloor \sqrt{N} \rfloor) \times \lfloor \sqrt{N} \rfloor$ matrices and $\lfloor \sqrt{N} \rfloor \times m$ matrices, which totally needs $O(\sqrt{N}) \cdot \lceil d_{in} / \lfloor \sqrt{N} \rfloor \rceil \cdot \lceil d_{out} / \lfloor \sqrt{N} \rfloor \rceil = O(d_{in} \cdot d_{out} / \sqrt{N})$ homomorphic multiplications and $O(\sqrt{N} / (d_{out} \bmod \lfloor \sqrt{N} \rfloor)) \cdot \lceil d_{in} / \lfloor \sqrt{N} \rfloor \rceil = O(d_{in} / (d_{out} \bmod \sqrt{N}))$ homomorphic rotations.
	\end{proof}
	
	
	\begin{proof}[Proof of Lemma \ref{lemma:time_cost}]
		To test a single model on a single sample, HESV and SecSV need to perform $L$ matrix multiplications for the $L$ linear layers.
		These matrix multiplications result in $O(\sum_{l=1}^{L} (d^{(l)}_{in} \cdot d^{(l)}_{out} / \sqrt{N}))$ HMults and $O(\sum_{l=1}^{L} \frac{d_{in}}{d_{out} \bmod \sqrt{N}})$ HRots for HESV while $O(\sum_{l=1}^{L} d^{(l)}_{in})$ $HMult$s and no HRot for SecSV.
		
		Then, when testing a batch of test samples using the SIMD property of HE, we should decide on a batch size $m$ that is compatible with all the layers since they may have different shapes.
		Therefore, we have $m \leq \min\{d^{(1)}_{in},...,d^{(L)}_{in},\sqrt{N}\}$ for HESV and $m \leq \lfloor N/\max\{d^{(1)}_{out},...,d^{(L)}_{out} \} \rfloor$ for SecSV.
		
		Therefore, to test a single model on $M$ test samples, HESV needs $O( \frac{M \cdot (\sum_{l=1}^{L} d^{(l)}_{in} \cdot d^{(l)}_{out})}{\min\{d^{(1)}_{in},...,d^{(L)}_{in},\sqrt{N}\} \cdot \sqrt{N}})$ HMults and $O(\frac{M \cdot \sum_{l=1}^{L} \frac{d_{in}}{d_{out} \bmod \sqrt{N}}}{\min\{d^{(1)}_{in},...,d^{(L)}_{in},\sqrt{N}\}})$ HRots, while SecSV needs $O( \frac{M \cdot \max\{d^{(1)}_{out},...,d^{(L)}_{out}\} \sum_{l=1}^{L} d^{(l)}_{in}}{N})$ HMults and no HRot.
		Finally, since we have $O(2^{n})$ models to test for each training round $t \in [T]$, we conclude that Lemma \ref{lemma:time_cost} stands true.
	\end{proof}

	\begin{proof}[Proof of Theorem \ref{thm:sampleskip}]
		For any model $\theta\in [\theta_{S_1}, \theta_{S_2}]$, if $y=\argmax_{j} f_{\theta}(x)[j]$, we have $f_{\theta}(x)[y] > f_{\theta}(x)[j], \forall j \neq y$.
		Because $\eta$ is an entrywise strictly increasing function, we have
		\begin{align*}
			&\forall \theta \in [\theta_{S_1}, \theta_{S_2}], (\theta^w x + \theta^b)[y] > (\theta^w x + \theta^b)[j], \forall j \neq y \\
			&\Rightarrow (\theta_{\{{S_1}, {S_2}\}}^w x + \theta_{\{{S_1}, {S_2}\}}^b)[y] > (\theta_{\{{S_1}, {S_2}\}}^w x + \theta_{\{{S_1}, {S_2}\}}^b)[j], \forall j \neq y\\
			&\Rightarrow f_{\theta_{\{{S_1}, {S_2}\}}}(x)[y] > f_{\theta_{\{{S_1}, {S_2}\}}}(x)[j], \forall j \neq y
		\end{align*}
		from which we conclude that $\argmax_{j} f_{\theta_{\{{S_1}, {S_2}\}}}(x)[j] = y$.
	\end{proof}
	
	\begin{proof}[Proof of Lemma \ref{lemma:correctness}]
		For each round $t$, for each client $i \in I^t$, , we have
		\begin{align*}
			|\hat{\phi}^t_i - \phi^t_i| = & \big|\sum_{S\subseteq I^t \backslash\{i\}} \frac{|S|!(|I^t|-|S|-1)!}{|I^t|!}\big(\hat{v}(\theta^t_{S\cup \{i\}}) \!- \!\hat{v}(\theta^t_{S})\big) \\
			& - \sum_{S\subseteq I^t \backslash\{i\}} \frac{|S|!(|I^t|-|S|-1)!}{|I^t|!}\big(v(\theta^t_{S\cup \{i\}}) \!- \!v(\theta^t_{S})\big) \big|\\
			= & \big|\sum_{S\subseteq I^t \backslash\{i\}} \frac{|S|!(|I^t|-|S|-1)!}{|I^t|!} \big( (\hat{v}(\theta^t_{S\cup \{i\}}) - v(\theta^t_{S\cup \{i\}})) \\
			& - (\hat{v}(\theta^t_{S}) - v(\theta^t_{S})) \big) \big| \\
			= & \big|\sum_{S\subseteq I^t \backslash\{i\}} \frac{|S|!(|I^t|-|S|-1)!}{|I^t|}  (\Delta v(\theta^t_{S\cup \{i\}}) - \Delta v(\theta^t_{S})) \big| \\
			\leq & \sum_{S\subseteq I^t \backslash\{i\}} \frac{|S|!(|I^t|-|S|-1)!}{|I^t|} \big| (\Delta v(\theta^t_{S\cup \{i\}}) - \Delta v(\theta^t_{S})) \big| \\
			\leq & \sum_{S\subseteq I^t \backslash\{i\}} \frac{|S|!(|I^t|-|S|-1)!}{|I^t|}  \big| \Delta v_{max} \big| = \Delta v_{max}
		\end{align*}
		Therefore, for each client $i \in I$, we have
		$$|\hat{\phi}_i - \phi_i| = |\sum_{t=1}^T (\hat{\phi}^t_i - \phi^t_i)| \leq \sum_{t=1}^T |\hat{\phi}^t_i - \phi^t_i| \leq \sum_{t=1}^T \Delta v_{max} = T \cdot \Delta v_{max}$$
		
	\end{proof}
	
	\begin{proof}[Proof of Proposition \ref{prop:security}]
		Under HESV (SecSV), the server(s) can only obtain encrypted models, encrypted (secret shares of) input features, encrypted output features, encrypted (secret shares of) labels, and the count of (the indices of) correctly predicted samples.
		Since the probabilities of the server(s) successfully inferring input features, output features, or model parameters from the count of (the indices of) correctly predicted samples are negligible, according to Definitions \ref{def:es_attack}, \ref{def:mi_attack}, and \ref{def:retrain_attack}, the server(s) cannot apply the defined attacks due to the lack of the necessary messages.
		
		Let $i_{\theta}$ denote the owner of model $\theta$.
		For each layer $l$, if $n \geq 3$, given that client $i_{l}$ possseses the input features $X^{(l)}$, HESV and SecSV ensures that the output features $\hat{Y}^{(l)}$ are allocated to another client $i_{l+1} \neq i_{l} \neq i_{\theta}$.
		Since for all PPT adversaries $\mathcal{A}$ and for all linear layers $l$, there is a negligible $\epsilon$ such that $Pr[\mathcal{A}(\hat{Y}^{(l)}) = X^{(l)}] \leq \epsilon$ over all possible $(\theta^{(l)}, X^{(l)})$, we have
		\begin{align*}
			& Pr[(\mathcal{A}(\hat{Y}^{(l)}) = X^{(l)}) \wedge (\mathcal{A}(X^{(l)},\hat{Y}^{(l)}) = \theta^{(l)})] \\
			= & Pr[\mathcal{A}(X^{(l)}, \hat{Y}^{(l)}) \!=\! \theta^{(l)} | \mathcal{A}(\hat{Y}^{(l)}) \!=\! X^{(l)}]  \!\cdot\! Pr[\mathcal{A}(\hat{Y}^{(l)}) \!=\! X^{(l)}] \leq \epsilon
		\end{align*}
		Therefore, the probability of client $i_{l+1}$ applying single-layer ES attacks to infer $\theta^{(l)}$ is negligible.
		Also, since for all PPT adversaries $\mathcal{A}$ and for all linear layers $l$, there is a negligible $\epsilon$ such that $Pr[\mathcal{A}(\hat{Y}^{(l)}) = \theta^{(l)}] \leq \epsilon$ over all possible $(\theta^{(l)}, X^{(l)})$, we have
		\begin{align*}
			\medmath{Pr[\wedge_{l=1}^{l_2}\big((\mathcal{A}(\hat{Y}^{(l)})\! = \!\theta^{(l)}) \!\wedge \!(\mathcal{A}(\hat{Y}^{(l)}) \!=\! X^{(l)})\big) \wedge (\mathcal{A}(\{\theta^{(l)}\}_{l=1}^{l_2}, \hat{Y}^{(l_2)}) \!= \!X^{(1)})] \leq \epsilon}
		\end{align*}
		Therefore, the probability of applying ES data stealing attacks to infer the features $X^{(1)} = X$ of test data is negligible.
		
		Similarly, if $n \geq 4$, HESV and SecSV can ensure that client $i_{L+1}$ who receives the output features $\hat{Y}^{(L)}$ is not the owner $i_{1}$ of the input $X^{(1)} = X$.
		Then, the probability of client $i_{L+1}$ applying retraining attacks is negligible:
		\begin{align*}
			& Pr[\wedge_{l=1}^{L}(\mathcal{A}(\hat{Y}^{(l)}) = X^{(l)}) \wedge (\mathcal{A}(X^{(1)}, \hat{Y}) = f')]  \\
			= & Pr[\wedge_{l=1}^{L-1}(\mathcal{A}(\hat{Y}^{(l)}) = X^{(l)}) \wedge (\mathcal{A}(X^{(1)}, \hat{Y}) = f') | \mathcal{A}(\hat{Y}^{(L)}) = X^{(L)}]  \\
			& \cdot Pr[\mathcal{A}(\hat{Y}^{(L)}) = X^{(L)}] \leq \epsilon
		\end{align*}
		For each sample $(x,y)$ in the batch, assume that it was in the training set $\mathcal{D}^{train}$ without loss of generality;
		the probability of client $i_{L+1}$ applying an MI attack is negligible:
		\begin{align*}
			& Pr[\wedge_{l=1}^{L}(\mathcal{A}(\hat{Y}^{(l)}) = X^{(l)}) \wedge (\mathcal{A}((x,y), \hat{Y}) = ((x,y) \text{ was in } \mathcal{D}^{train}))]  \\
			= & \medmath{Pr[\wedge_{l=1}^{L-1}(\mathcal{A}(\hat{Y}^{(l)}) \!=\! X^{(l)}) \!\wedge\! (\mathcal{A}((x,y), \hat{Y}) \!=\! ((x,y) \text{ was in } \mathcal{D}^{train})) | \mathcal{A}(\hat{Y}^{(L)}) \!=\! X^{(L)}]}  \\
			&\cdot Pr[\mathcal{A}(\hat{Y}^{(L)}) = X^{(L)}] \leq \epsilon
		\end{align*}
		
		Finally, if $n \geq L + 2$, HESV and SecSV can ensure that all layers of input and output features are possessed by different clients and that the model owner receives nothing.
		In this case, for all possible $l_1, l_2$ such that $l_1 < l_2$, the probability of client $i_{l_2+1}$ applying multi-layer ES attacks to steal $\theta^{(l_1)},...,\theta^{(l_2)}$ is negligible: 
		\begin{align*}
			& Pr[\wedge_{l=l_1}^{l_2}(\mathcal{A}(\hat{Y}^{(l)}) = X^{(l)})  \wedge (\mathcal{A}(X^{(l_1)}, \hat{Y}^{(l_2)}) = (\theta^{(l_1)},...,\theta^{(l_2)}))]  \\
			= & \medmath{Pr[\wedge_{l=l_1}^{l_2 - 1}(\mathcal{A}(\hat{Y}^{(l)}) \!=\! X^{(l)})  \!\wedge\! (\mathcal{A}(X^{(l_1)}, \hat{Y}^{(l_2)}) \!=\! (\theta^{(l_1)},...,\theta^{(l_2)})) | \mathcal{A}(\hat{Y}^{(l_2)}) \!=\! X^{(l_2)}]} \\
			&\cdot Pr[\mathcal{A}(\hat{Y}^{(l_2)}) = X^{(l_2)}] \leq \epsilon
		\end{align*}
		
	\end{proof}

	\begin{table}[ht]
		\caption{Logistic classifier for AGNEWS and BANK.}
		\label{tab:logistic}
		\begin{tabular}{|l|l|}
			\hline
			FC      & \begin{tabular}[c]{@{}l@{}}Fully connects $d$ input features \\ to $c$ output features: $\mathbb{R}^{2\times 1}\leftarrow \mathbb{R}^{2\times 48}\times \mathbb{R}^{48 \times 1}$\end{tabular} \\ \hline
			Sigmoid & \begin{tabular}[c]{@{}l@{}}Calculates the sigmoid function \\ for each output feature\end{tabular}                                                                                             \\ \hline
		\end{tabular}
	\end{table}
	
	\begin{table}[ht]
		\caption{CNN classifier for MNIST.}
		\label{tab:minst_cnn}
		\begin{tabular}{|l|l|}
			\hline
			Conv.  & \begin{tabular}[c]{@{}l@{}}Input image $28\times28$, kernel size $7\times7$, strize size $3$, \\ output channels 4: $\mathbb{R}^{4\times64}\leftarrow \mathbb{R}^{4\times 28}\times \mathbb{R}^{28\times 64}$\end{tabular} \\ \hline
			Square & Squares $256$ output features                                                                                                                                                                                              \\ \hline
			FC     & \begin{tabular}[c]{@{}l@{}}Fully connects $256$ input features \\ to $64$ output features: $\mathbb{R}^{64\times 1}\leftarrow \mathbb{R}^{64\times 256}\times \mathbb{R}^{256 \times 1}$\end{tabular}                      \\ \hline
			Square & Squares $64$ output feautres                                                                                                                                                                                               \\ \hline
			FC     & \begin{tabular}[c]{@{}l@{}}Fully connects $64$ input features \\ to $10$ output features: $\mathbb{R}^{10\times 1}\leftarrow \mathbb{R}^{10\times 64}\times \mathbb{R}^{64 \times 1}$\end{tabular}                         \\ \hline
		\end{tabular}
	\end{table}
	
	\begin{table}[]
		\small
		\caption{RNN classifier for miRNA-mRNA.}
		\label{tab:mrna_rnn}
		\begin{tabular}{|l|l|}
			\hline
			\multirow{5}{*}{GRU} & Repeat the following steps $10$ times:                                                                                                                                                                                                                                                                                                                                                                                                                                                                                                                                                                                                                                                                                                                                                                                                                                             \\ \cline{2-2} 
			& \begin{tabular}[c]{@{}l@{}}(1) Fully connects $64$ input features to $32$ input-reset features, \\ fully connects $32$ hidden features to $32$ hidden-reset features,\\ aggregates input-reset features and hidden-reset features,\\ and calculates the sigmoid function of the aggregated result \\ as reset gates:\\ $\mathbb{R}^{32\times 1} \leftarrow \mathbb{R}^{32 \times 64} \times \mathbb{R}^{64\times1}$,\\ $\mathbb{R}^{32\times 1} \leftarrow \mathbb{R}^{32 \times 32} \times \mathbb{R}^{32\times1}$,\\ $\mathbb{R}^{32\times 1} \leftarrow \mathbb{R}^{32 \times 1} + \mathbb{R}^{32 \times 1}$,\\ $\mathbb{R}^{32\times 1} \leftarrow Sigmoid(\mathbb{R}^{32\times 1})$.\end{tabular}                                                                                                                                                                             \\
			& \begin{tabular}[c]{@{}l@{}}(2) Fully connects $64$ input features to $32$ input-update features, \\ fully connects $32$ hidden features to $32$ hidden-update features,\\ aggregates input-update features and hidden-update features,\\ and calculates the sigmoid function of the aggregated result \\ as update gates:\\ $\mathbb{R}^{32\times 1} \leftarrow \mathbb{R}^{32 \times 64} \times \mathbb{R}^{64\times1}$,\\ $\mathbb{R}^{32\times 1} \leftarrow \mathbb{R}^{32 \times 32} \times \mathbb{R}^{32\times1}$,\\ $\mathbb{R}^{32\times 1} \leftarrow \mathbb{R}^{32 \times 1} + \mathbb{R}^{32 \times 1}$,\\ $\mathbb{R}^{32\times 1} \leftarrow sigmoid(\mathbb{R}^{32\times 1})$.\end{tabular}                                                                                                                                                                        \\
			& \begin{tabular}[c]{@{}l@{}}(3) Fully connects $64$ input features to $32$ input-new features, \\ fully connects $32$ hidden features to $32$ hidden-new features,\\ multiplies $32$ reset gates with $32$ hidden-new features\\ to derive $32$ reset features,\\ and aggregates input-new features and reset features,\\ and calculates the tanh function of the aggregated result \\ as new gates:\\ $\mathbb{R}^{32\times 1} \leftarrow \mathbb{R}^{32 \times 64} \times \mathbb{R}^{64\times1}$,\\ $\mathbb{R}^{32\times 1} \leftarrow \mathbb{R}^{32 \times 32} \times \mathbb{R}^{32\times1}$,\\ $\mathbb{R}^{32\times 1} \leftarrow \mathbb{R}^{32 \times 1} \cdot \mathbb{R}^{32 \times 1}$,\\ $\mathbb{R}^{32\times 1} \leftarrow \mathbb{R}^{32 \times 1} + \mathbb{R}^{32 \times 1}$,\\ $\mathbb{R}^{32\times 1} \leftarrow tanh(\mathbb{R}^{32\times 1})$.\end{tabular} \\
			& \begin{tabular}[c]{@{}l@{}}(4) Aggregate $32$ hidden features and $32$ new gates \\ with weights of update gates and (1 - update gates)\\ to update hidden features:\\ $\mathbb{R}^{32\times 1} \leftarrow \mathbb{R}^{32 \times 1} \cdot \mathbb{R}^{32\times1}$,\\ $\mathbb{R}^{32\times 1} \leftarrow \mathbb{R}^{32 \times 1} \cdot \mathbb{R}^{32\times1}$,\\ $\mathbb{R}^{32\times 1} \leftarrow \mathbb{R}^{32 \times 1} + \mathbb{R}^{32 \times 1}$.\end{tabular}                                                                                                                                                                                                                                                                                                                                                                                                          \\ \hline
			FC                   & Fully connects $32$ input features to $2$ output features.                                                                                                                                                                                                                                                                                                                                                                                                                                                                                                                                                                                                                                                                                                                                                                                                                         \\ \hline
		\end{tabular}
	\end{table}
	
	\begin{table}[ht]
		\caption{DNN classifier.}
		\label{tab:dnn}
		\begin{tabular}{|l|l|}
			\hline
			\begin{tabular}[c]{@{}l@{}}Layer $1$\\ (input)\end{tabular}               & \begin{tabular}[c]{@{}l@{}}Fully connects $d$ input features\\ to $100$ output features:\\ $\mathbb{R}^{100\times1} \leftarrow \mathbb{R}^{100\times d}\times \mathbb{R}^{d\times 1}$\end{tabular}     \\ \hline
			\begin{tabular}[c]{@{}l@{}}Layer $l\in \{2,L-1\}$\\ (hidden)\end{tabular} & \begin{tabular}[c]{@{}l@{}}Fully connects $100$ input features\\ to $100$ output features:\\ $\mathbb{R}^{100\times1}\leftarrow \mathbb{R}^{100\times100}\times \mathbb{R}^{100\times1}$.\end{tabular} \\ \hline
			\begin{tabular}[c]{@{}l@{}}Layer $L$\\ (output)\end{tabular}              & \begin{tabular}[c]{@{}l@{}}Fully connects $100$ input features\\ to $c$ output features:\\ $\mathbb{R}^{c\times1} \leftarrow \mathbb{R}^{c\times 100}\times \mathbb{R}^{100\times 1}$\end{tabular}     \\ \hline
		\end{tabular}
	\end{table}
	
	\section{Datasets and Classifiers}
	\label{appendix:data_model}
	\subsubsection*{Datasets}
	We use the following datasets for experiments.
	\begin{itemize}[leftmargin=*]
		\item \textbf{AGNEWS \cite{zhang2015character}}: It is a subdataset of AG's corpus news articles for news classification consisting of $120000$ training samples and $7600$ test samples with $4$ classes of labels.
		\item \textbf{BANK \cite{moro2014data}}: This dataset contains $41188$ samples of bank customers' personal information for bank marketing.
		The goal is to predict whether a customer will subscribe a term deposit or not, which is a binary classification task.
		We split the dataset into $31188$ samples for training and $10000$ samples for testing.
		\item \textbf{MNIST \cite{lecun1998gradient}}: It is a dataset of handwritten digits and is one of the most famous benchmark for image classification.
		It has $60000$ training images and $10000$ test images, and the task is to identify the digit from $0$ to $9$ in a given image.
		\item \textbf{miRNA-mRNA \cite{menor2014mirmark}}: This dataset comprises human miRNA-mRNA pairs for miRNA target prediction, with $62642$ pairs for training and $3312$ pairs for testing.
		The task is to predict whether a given mRNA is the target of a given miRNA.
	\end{itemize}
	
	\subsubsection*{Classifiers}
	We use the following classifiers for experiments.
	\begin{itemize}[leftmargin=*]
		\item \textbf{Logistic}: For the BANK and AGNEWS datasets, we use the logistic classifier with $1$ fully-connected (FC) layer and the sigmoid activation (see Table \ref{tab:logistic}).
		\item \textbf{CNN}: For the MNIST dataset, we follow \cite{jiang2018e2dm} to use a convolutional NN (CNN) with $1$ convolution layer and $2$ FC layers with the square activation (see Table \ref{tab:minst_cnn}).
		\item \textbf{RNN}: For the miRNA-mRNA dataset, we adopt a recurrent NN named deepTarget \cite{lee2016deeptarget} with $1$ gate recurrent units (GRUs) layer and $1$ FC layer (see Table \ref{tab:mrna_rnn}).
		\item \textbf{DNN}: For all the datasets, we adopt the widely-used DNN classifier with $1$ FC input layer, $1$ FC output layer, and $L-2$ FC hidden layers with $100$ hidden nodes (see Table \ref{tab:dnn}).  
\end{itemize}}
	
\end{document}
\endinput